\newtheorem{theorem}{Theorem}
\newcommand{\squishlist}{
  \begin{list}{$\bullet$}
   {
     \setlength{\itemsep}{0pt}
     \setlength{\parsep}{0pt}
     \setlength{\topsep}{0pt}
     \setlength{\partopsep}{0pt}
     \setlength{\leftmargin}{1.5em}
     \setlength{\labelwidth}{1em}
     \setlength{\labelsep}{0.5em} } 
}
\newcommand{\squishend}{
   \end{list} 
}
\begin{document}

%don't want date printed
\date{}

%make title bold and 14 pt font (Latex default is non-bold, 16 pt)
\title{Tempo: Robust and Self-Tuning Resource Management in
  Multi-tenant Parallel Databases}

%for single author (just remove % characters)
\author{
{\rm Zilong Tan}\\
{\rm Duke University}\\
{\rm ztan@cs.duke.edu}
\and
{\rm Shivnath Babu}\\
{\rm Duke University}\\
{\rm shivnath@cs.duke.edu}
% copy the following lines to add more authors
% \and
% {\rm Name}\\
%Name Institution
} % end author

\maketitle

% Use the following at camera-ready time to suppress page numbers.
% Comment it out when you first submit the paper for review.

%\thispagestyle{empty}

\subsection*{Abstract}
Multi-tenant database systems have a component called the {\em
  Resource Manager, or RM} that is responsible for allocating
resources to tenants. RMs today do not provide direct support for
performance objectives such as: ``Average job response time of tenant
A must be less than two minutes'', or ``No more than 5\% of tenant B's
jobs can miss the deadline of 1 hour.''  Thus, DBAs have to tinker
with the RM's low-level configuration settings to meet such
objectives.  We propose a framework called {\em Tempo} that brings
{\em simplicity}, {\em self-tuning}, and {\em robustness} to existing
RMs.  Tempo provides a simple interface for DBAs to specify
performance objectives declaratively, and optimizes the RM
configuration settings to meet these objectives.  Tempo has a solid
theoretical foundation which gives key robustness guarantees. We
report experiments done on Tempo using production traces of
data-processing workloads from companies such as Facebook and
Cloudera.  These experiments demonstrate significant improvements in
meeting desired performance objectives over RM configuration settings
specified by human experts.

\section{Introduction}
\label{sec:intro}

Many enterprises today run multi-tenant database systems on large
shared-nothing clusters. Examples of such systems include parallel SQL
database systems like RedShift \cite{redshift}, Teradata
\cite{teradata}, and Vertica \cite{vertica}, Hadoop/YARN running SQL
and MapReduce workloads, Spark running on Mesos \cite{Hindman11} or
YARN \cite{Vavil13}, and many others.  Meeting the performance goals
of business-critical workloads (popularly called {\em service-level
  objectives}, or {\em SLOs}) while achieving high resource
utilization in multi-tenant database systems has become more important
and challenging than ever.

The problem of handling many (often in 1000s) small and independent
databases on a multi-tenant database Platform-as-a-Service (usually
called {\em PaaS} or {\em DBaaS}) has received considerable attention
in recent years \cite{Xiong11,Nara13,slo11,viv13,Sud13}.  That is not
the problem we focus on in this paper. Our focus is on handling fewer,
but much ``bigger'', tenants who process very large amounts of data on
a shared-nothing cluster that is usually run within an enterprise.
Hadoop, Spark, Teradata, Vertica, etc., are typically run in such
settings.

These multi-tenant database systems each have a component---commonly
referred to as the {\em Resource Manager (RM)} (also sometimes called
{\em Workload Manager})---that is responsible for allocating resources
to tenants. Most widely deployed RMs like YARN and Mesos do not
support SLOs. Instead, they rely on the Database Administrator (DBA)
to ``guesstimate'' answers to questions such as: ``How much resources
are needed to complete this job before its deadline?'' Then, DBAs have
to translate their answers into low-level configuration settings in
the RM.  This process is brittle and increasingly hard as workloads
evolve, data and cluster sizes change, and new workloads are added.
Thus, techniques have been proposed in the literature to support
specific SLOs such as deadlines \cite{Curino14,Li14,Ferg12,ARIA}, fast
job response times \cite{Boutin14,Curino14,Grandl14,Ous13}, high
resource utilization \cite{Corona,Boutin14,Curino14}, scalability
\cite{Corona,Schw13,Zhang14}, and transparent failure recovery
\cite{Zhang14}.

In this paper, we present a framework called {\em Tempo} that brings
three properties to existing RMs: {\em simplicity}, {\em self-tuning},
and {\em robustness}.  First, Tempo provides a simple interface for
DBAs to specify SLOs declaratively. Thus, Tempo enables the RM to be
made aware of SLOs such as: ``Average job response time of tenant A
must be less than two minutes'', and ``No more than 5\% of tenant B's
jobs can miss the deadline of 1 hour.''  Second, Tempo constantly
monitors the SLO compliance in the database, and adaptively optimizes
the RM configuration settings to maximize SLO compliance.  Third,
Tempo has a solid theoretical foundation which gives five critical
robustness guarantees:

\squishlist
\item[1)] Tempo's optimization and modeling algorithms account for the
  noisy nature of production database systems.
\item[2)] Tempo's optimization algorithm converges provably to a
  Pareto-optimal RM configuration given that satisfying multiple
  tenant SLOs is a multi-objective optimization problem.
\item[3)] When all SLOs cannot be satisfied---which is common in busy
  database systems---Tempo guarantees max-min fairness over SLO
  satisfactions \cite{mace15}.
\item[4)] Tempo adapts to workload patterns and variations.
\item[5)] Tempo reduces the risk of major performance regression while
  being applied to production database systems.
\squishend

\noindent We have implemented Tempo as a drop-in component in the RMs
used by multi-tenant databases running on Hadoop and Spark.  We report
experiments done using production traces of data-processing workloads
from companies such as Facebook and Cloudera.  These experiments
demonstrate significant improvements in meeting the SLOs over the
original RMs used in five real-life scenarios.  For example, Tempo can
reduce the average job response time by 50\% for best-effort workloads
and increase resource utilization by 15\%, without hurting the
deadline-driven workloads.

\section{Production System Experiences}
\label{sec:experiences}

Tempo's design was motivated by our observations from several large
production database systems.  While designing Tempo, we analyzed
workload traces from three companies each of which runs multi-tenant
database systems on large clusters. Two of these systems run on 600+
nodes while the other runs on about 150 nodes. (While all three are
well-known companies, we cannot share their names due to legal
restrictions.) We talked to business analysts, application developers,
team managers, and DBAs in these teams to understand the SLOs that
they need to meet and the challenges they face in resource
management. From all our interviews, the following emerged as the top
concerns:

\squishlist
\item Concern A: Deadline-based workloads and best-effort workloads
  have to be supported on the same database system.
\item Concern B: Repeatedly-run jobs often have unpredictable
  completion times.
\item Concern C: Resource utilization was lower than expected.
\item Concern D: Resource allocation does not adapt automatically to
  the patterns and variations in the workloads.
\squishend

\noindent To elaborate on these four concerns, we will use one of the
three companies---henceforth, referred to as Company ABC---which is a
real-life company that runs a multi-tenant database system on a
700-node Hadoop cluster with over 30 Petabytes of data.

\subsection{Concern A}
\label{sec:concern-a}

Company ABC has three types of users who generate database workloads.
Business Intelligence (BI) analysts and Data Scientists predominantly
do exploratory analysis on the data. Engineers develop and maintain
recurring jobs that run on the database.  One such category of jobs is
Extract-Transform-Load (ETL) which brings new data into the system.
Each job goes through many runs in a development phase on the cluster
before being certified to run as a production job. Thus, the system
supports both development and production runs of jobs.

Distinct workloads from these users form the {\em tenants} in the
multi-tenant system.  Table~\ref{tbl:rf_users} shows the six tenants
at Company ABC and their distinct workload characteristics. (The
experimental evaluation section gives more fine-grained details of
these workloads.)

\begin{table}[t!]
  \centering
  {\small
    \begin{tabular}{l|l}
      Tenant & Characteristics\\
      \hline \hline
      BI & I/O-intensive SQL queries\\
      \hline DEV & Mixture of different types of jobs\\
      \hline APP & Small, lightweight jobs\\
      \hline STR & Hadoop streaming jobs\\
      \hline MV & Long-running, CPU-intensive\\
      \hline ETL & I/O-intensive, periodic but bursty\\
    \end{tabular}
    \vspace{-2mm}
    \caption{Tenant characteristics at Company ABC.}
    \vspace{-4mm}
  }
  \label{tbl:rf_users}
\end{table}

The BI and ETL users correspond directly to similarly-named
tenants. Among the other tenants, MV corresponds to the creation of
{\em Materialized Views} such as joined results of multiple tables as
well as statistical models created from the incoming data brought
through ETL.  The BI users and Data Scientists usually write their
queries and analysis programs on these materialized views. The APP
tenant runs jobs from a specific high-priority production
application. The DEV and STR tenants mostly comprise queries and
analysis programs being run as part of application development by
engineers and Data Scientists.  At Company ABC:

\squishlist
\item Jobs from the ETL and MV tenants have deadlines because any
  delay in these jobs will affect the entire daily operations of the
  company.  We have seen multi-day delays caused by deadline misses
  for the ETL and MV tenants that had significant business impact.
\item About 30\% of high-priority jobs in APP miss deadlines.
\item While all tenants want as low job response time as possible for
  completion of their jobs, BI, DEV, and STR are treated as
  ``best-effort'' tenants in that the goal is to provide their jobs as
  low response time as possible subject to meeting the requirements of
  the ETL, MV, and APP tenants.
\squishend

\subsection{Concern B}
\label{sec:concern-b}
Predictability of completion time for recurring jobs is a key need in
most companies.  This demand stems from ease of resource planning and
scheduling for dependent jobs. At Company ABC:

\squishlist
\item The completion of one of the recurring jobs of the ETL tenant
  varies between 5 and 60 minutes.
\item The completion of one of the recurring jobs of the MV tenant
varies between 2 and 6 hours. 
\squishend

\noindent While we observed that this variance is caused partly by
variation in the input sizes of the jobs across runs, these sizes
exhibit strong temporal patterns. For example, the input sizes of the
recurring jobs in ETL vary across days within a week, but remain
stable across multiple weeks.

\subsection{Concern C}
\label{sec:concern-c}

Resources can be wasted in multi-tenant systems due to reasons such
as: (i) task preemption; (ii) suboptimal configuration of resource
limits; and (iii) jobs in poorly-written queries being killed by DBAs.
Figure~\ref{fig:wasted_util} illustrates the impact of preemption
based on two tenants, A and B.  Tenant A first launched some tasks and
used up all the resources, yielding 100\% resource
utilization. Suppose B submitted tasks just after the resources were
grabbed by A. Then, without preemption, B's tasks will have to wait
until A's tasks finish; which could cause B to miss its deadlines.

On the other hand, suppose a preemption timeout of 1 time unit is
configured for B.  Then, preemptions will take place at time $2$
killing the most recently launched tasks of Tenant A, and B will
acquire the freed resources immediately after.  However, A's tasks
will lose the unfinished work and then be restarted at time $3$. The
region marked I in Figure~\ref{fig:wasted_util} corresponds to the
resources taken by the killed tasks of A that were preempted. This
figure shows that even if the resource utilization between time $1$
and time $3$ remained $1.0$ (100\%), the {\em effective utilization},
which excludes region I, is only 80\%.

\begin{figure}[t!]
  \centering
  \includegraphics[scale=0.4]{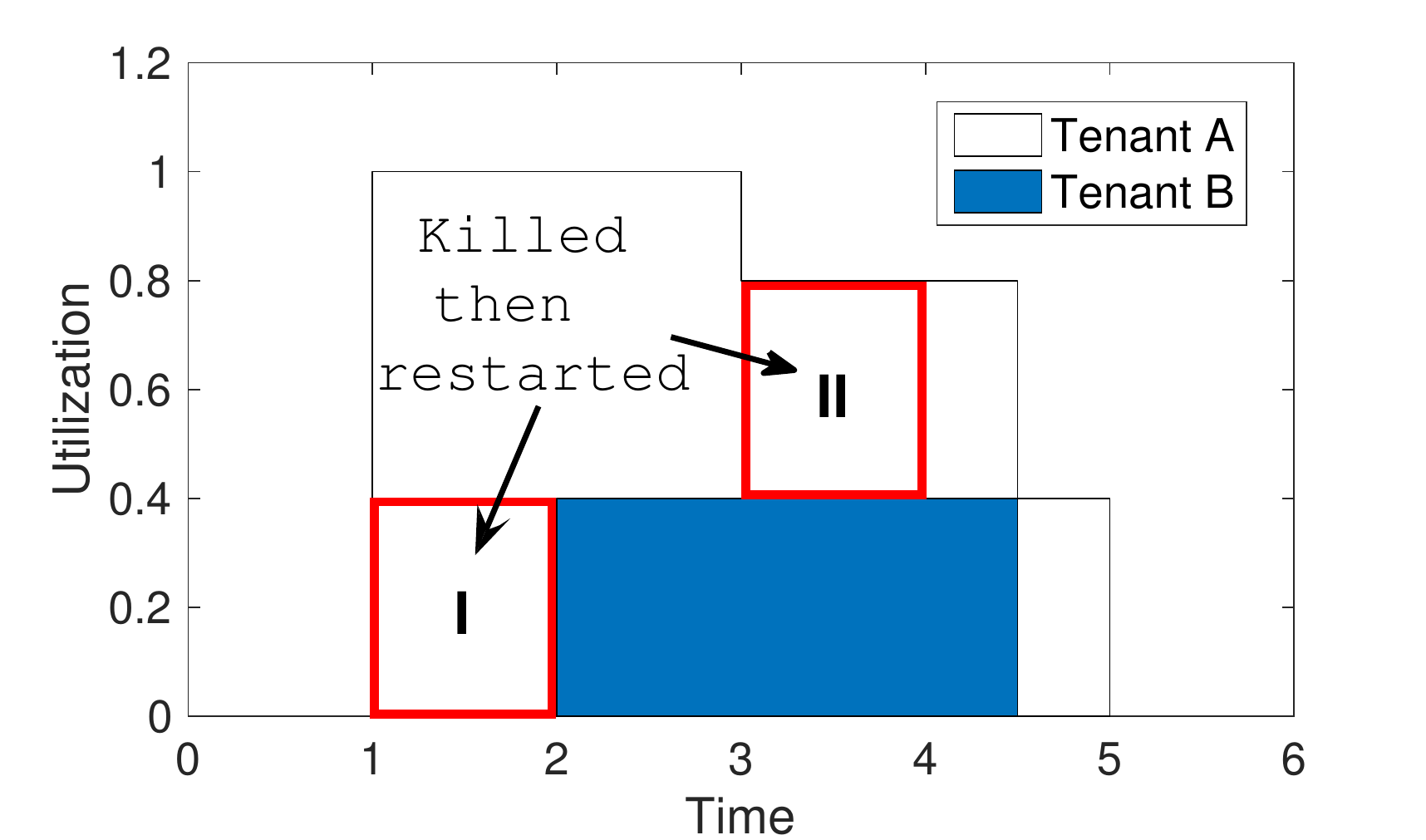}
  \vspace{-4mm}
  \caption{Wasted utilization due to preemption.}
  \vspace{-4mm}
  \label{fig:wasted_util}
\end{figure}

At Company ABC, 17.5\% of map tasks and 27.7\% of reduce tasks were
preempted for the jobs run by the MV tenant over a week interval.
This caused considerable amount of wasted resources, especially
because the reduce tasks of the MV tenant have long execution times.

\subsection{Concern D}
\label{sec:concern-d}
A resource allocation which meets the SLOs perfectly at one moment may
not be suboptimal at another moment due to various factors. First,
input data sizes for a tenant may vary considerably across shorter
time intervals while showing distinct patterns across longer
intervals. At Company ABC, ETL jobs process Web activity logs which
come in much smaller quantities on weekends.

\begin{figure}[ht]
  \centering
  \includegraphics[scale=0.35]{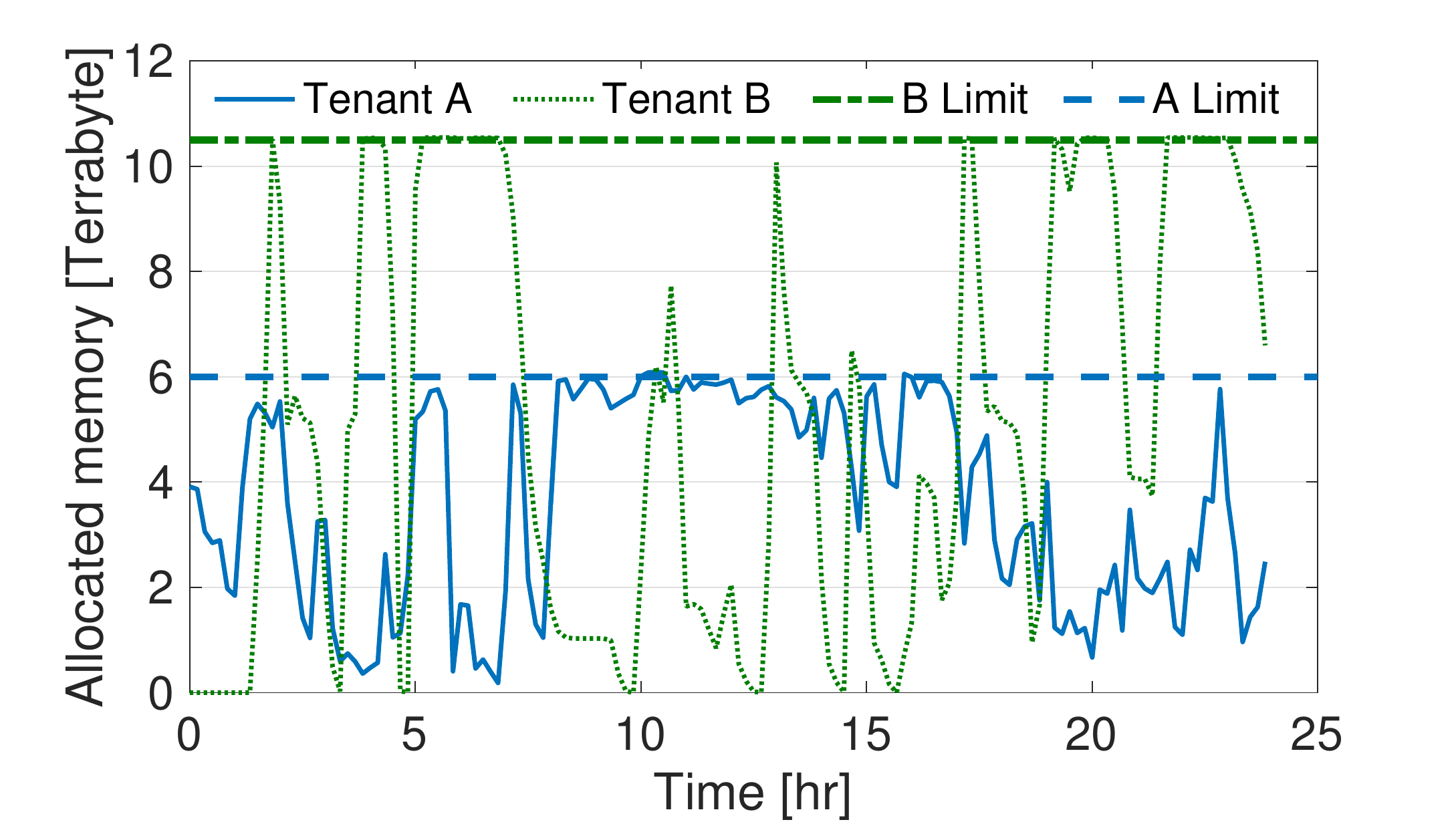}
  \vspace{-4mm}
  \caption{Memory consumption of two tenants during a day.}
  \vspace{-4mm}
  \label{fig:mem_use}
\end{figure}

Second, the resource demands of different tenants can be correlated
over time. For example, Figure \ref{fig:mem_use} shows the memory
consumption of two tenants at Company ABC over the course of a
day. The horizontal lines in the figure show the respective resource
limits that have been configured by the DBA to protect against
resource hoarding by tenants. Notice that while there are periods
where both tenants use up all available resources, there are other
periods where the configured resource limit prevents one tenant from
using the resources unused by the other.

\section{Overview of Problem}
\label{sec:problem}

From our interviews, two salient points emerged that summarize the
crux of what Tempo attempts to solve:

\squishlist
\item Workloads in multi-tenant parallel databases have SLOs. Current
  RMs do not provide easy ways to ensure that these SLOs are
  satisfied.
\item Current RMs require the DBA to estimate resources to meet the
  per-tenant SLOs, and then specify low-level RM configuration like
  resource shares, resource limits, and preemption timeouts in order
  to meet these SLOs.  This process is brittle and increasingly hard
  as workloads evolve, data and cluster sizes change, and new
  workloads are added.
\squishend

\subsection{SLOs}
\label{sec:slos}

From our interviews with users and DBAs, we identified five major
classes of SLOs.  The first class specifies job deadlines. For
recurring jobs, the deadline is either the start of the next run or an
absolute time point like 5:00 AM. The second class specifies that job
response time must be less than a given threshold. Such SLOs are often
associated with ad-hoc jobs. The third class is about ensuring that
each tenant gets a fair allocation of resources. In particular, when
the database is under contention, the proportion of resources
allocated to each tenant must adhere to predetermined values.  This
SLO class prevents individual tenants from monopolizing the resources
intentionally or otherwise.  Fourth, the resource utilization or job
throughput must be above a threshold. This SLO class generally serves
the interest of DBAs to maximize the return on investment (ROI) in the
cluster.  A fifth type of SLO orders the other SLOs in terms of
priority. This special SLO mandates that SLOs with higher priorities
be considered first when not all SLOs can be met with the resources
available.

\subsection{RM Configuration Space}
\label{sec:resource_model}
In this section, we will describe the typical set of configuration
parameters supported by modern RMs on a per-tenant basis. As we will
describe in later sections, Tempo adaptively computes the settings for
these per-tenant RM configuration parameters in order to maximize SLO
compliance.

Parallel databases decompose queries and analysis programs to DAGs
(Directed Acyclic Graphs) of jobs that each consist of one or more
parallel tasks. CPU, Memory, and other resources are allocated to
these tasks. The resources allocated to any tenant can be captured in
a fine-grained manner based on the start time, end time, and the
resource allocation vector $d$ for each of the tasks run on behalf of
the tenant.

In this paper, for ease of exposition, we will consider a
uni-dimensional representation of $d$ as an integer number of {\em
  containers} (or {\em slots}) as done in RMs like Mesos and YARN.
Namely, a task is run in a container that is allocated on behalf of a
tenant who submits the task. No two tasks can share the same
container.  The RM of a multi-tenant database system has a fixed total
number of containers that it can allocate across all tenants at any
point of time. This allocation is governed by a set of configuration
parameters for each tenant.  These parameters fall into three
categories, described next.

\vspace{1mm}
\noindent{\bf Resource Shares:} The resource share for a tenant
specifies the proportion of total resources that this tenant should
get with respect to other tenants. For example, suppose there are
three tenants A, B, and C with shares in the ratio 1:2:3
respectively. Suppose the database system has 12 containers that it
can allocate at any point of time. Then, if all tenants have tasks to
run, then tenants A, B, and C will get 2, 4, and 6 containers
respectively.

Suppose a tenant does not have tasks to run in its full quota of
resources. Then, the unused quota of resources will be allocated to
other tenants who have tasks to run.  This allocation will be
proportional to the resource shares of the other tenants. In the
example above, suppose tenant C has no tasks to run, but A and B have
many tasks to run.  Then, tenants A and B will get 4 and 8 containers
respectively.

\vspace{1mm}
\noindent{\bf Resource Limits:}
For any tenant, minimum and maximum limits can be specified for the
resources that this tenant can get at any point of time. In the
example above where tenants A, B, and C have shares in the ratio 1:2:3
respectively, suppose all tenants have many tasks to run, but the
maximum resource limit for tenant C is set to 3. Then, tenants A, B,
and C will get 3, 6, and 3 containers respectively. Limits are often
specified to ensure two things: (i) no tenant can monopolise all
resources, and (ii) critical workloads from a tenant can start running
as quickly as possible.

\vspace{1mm}
\noindent{\bf Resource Preemption:}
For any tenant, a configuration can be set to preempt---after a
certain time interval that the tenant should wait for---tasks from
other tenants that using resources that are rightly owed to this
tenant. Such preemption will free up resources for this tenant. There
are two levels of preemption timeouts. One level of preemption is when
the tenant's current resource allocation is below its configured
resource share. The other, and more critical level, is when the
tenant's current resource allocation is below its configured minimum
resource limit.

Preemption is important in multi-tenant systems.  Without preemption,
a low-priority tenant who submitted tasks earlier than a high-priority
tenant can cause the high-priority tenant to miss deadlines.
Preemption can be implemented by suspending tasks or by killing tasks
running in the container.  While task suspension is the preferred
mechanism, it is not supported in most multi-tenant systems that are
commonly used today. As we showed in Section \ref{sec:concern-c}, if
the two levels of preemption timeouts are not configured carefully,
then preemption by killing tasks can cause a lot of wasted work and
low resource utilization.

\subsection{Role of Tempo}
Our interviews revealed that DBAs manually tune the per-tenant RM
configuration parameters in order to meet tenant SLOs. For example, at
Company ABC, the RM configuration is tuned whenever tenants complain
about deadline or job response time SLOs not being met.  This process
is brittle because it is hard for the DBAs to take into account the
workload patterns and evolution, constant addition of new workloads,
and the multiple objectives and tradeoffs. The goal of Tempo is to
make this process easy and principled.

\section{Tempo}

As discussed in Section \ref{sec:intro}, Tempo is designed to bring
three properties to existing RMs: simplicity, self-tuning, and
robustness. As part of simplicity, Tempo introduces the concept of
{\em QS (Quantitative SLO)}.  A QS is a quantitative metric defined
per SLO to measure the satisfaction of the SLO at any point of time.
In Section \ref{sec:qs}, we will show how the QS concept supports
several tenant SLOs that arise in real-life use cases.

Operationally, the QS for an SLO can be thought of in two ways (recall
Section \ref{sec:resource_model}):
\squishlist
\item [1.] As a function $f\left(\pmb{x};w\right)$ where $w$ denotes
  the workload and $\pmb{x}$ is the vector representation of the
  parameters in the RM configuration used to allocate resources to
  process $w$.
\item [2.] As a function of the actual task resource allocation
  schedule (henceforth called {\em task schedule}) that is produced
  when the workload $w$ runs under $\pmb{x}$.
\squishend

\noindent As we will show in Section \ref{sec:qs}, it is conceptually
easier for humans to understand and use the QS concept when defined in
terms of the task schedule.  At the same time, Tempo needs the
$f\left(\pmb{x};w\right)$ notion in order to create a modular
architecture that provides self-tuning and
robustness. Figure~\ref{fig:framework} shows how this modular
architecture drives the repeated execution of Tempo's control loop.

\begin{figure}[ht]
  \centering \includegraphics[scale=0.45]{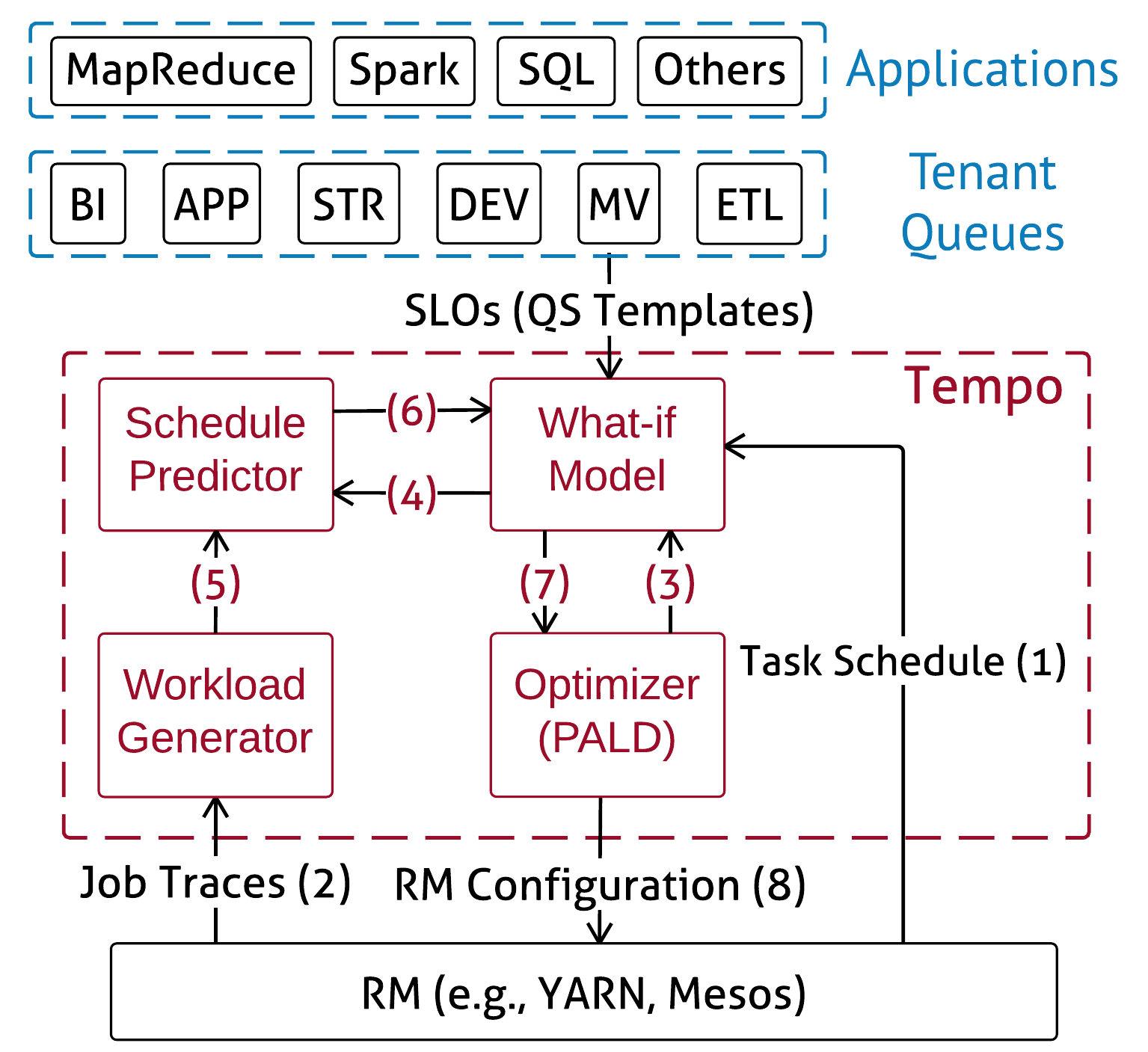}
  \vspace{-2mm}
  \caption{Tempo architecture: tenants specify SLOs using the QS
    templates, and Steps (1)-(8) form the Tempo control loop.}
  \label{fig:framework}
\end{figure}

The Tempo control loop consists of the eight steps denoted (1)-(8) in
Figure~\ref{fig:framework}. The inputs to the Tempo control loop are
the SLOs defined for each tenant (which can be specified conveniently
via predefined templates as discussed in Section \ref{sec:qs}). Step
(1) of the control loop extracts the recent task schedule for
evaluating QS metrics for the input SLOs under the current RM
configuration $\pmb{x}$.  Through Steps (2)-(8), Tempo replaces the
current RM configuration $\pmb{x}$ with a new one $\pmb{x}\prime$;
concluding one iteration of the control loop. Once the QS metrics for
the input SLOs under $\pmb{x}\prime$ are observed at Step (1) of the
next iteration, the Tempo control loop will revert the RM
configuration $\pmb{x}\prime$ back to $\pmb{x}$ if the currently
observed QS metrics do not dominate the previously observed ones. This
mechanism adds robustness in Tempo by guarding against performance
degradation during the self-tuning approach.

Steps (2)-(8) are orchestrated by Tempo's {\em Optimizer} which
applies a self-tuning algorithm called {\em PALD}. PALD is a novel
multi-objective optimization algorithm that we developed for the noisy
environments seen in production multi-tenant parallel database
systems.  As we will show in Section \ref{sec:theory}, PALD {\em
  provably} converges to a RM configuration that provides a
Pareto-optimal setting for the QS metrics of the input SLOs. In
addition, whenever available resources are insufficient to fully
satisfy all SLOs, PALD handles the SLO tradeoffs gracefully by
minimizing the largest regret across all SLO satisfactions as measured
by the QS metrics.

In Steps (2)-(8), the Optimizer explores a set of RM configurations by
proposing the RM configurations (3)-(4), getting the simulated task
schedule (6) of the workloads (5) based on the job traces (2). The
predicted QS metrics under these RM configurations are passed back to
the Optimizer (7) to compute a Pareto-improving RM configuration (8).
To implement these steps, the Optimizer uses three other components as
shown in Figure~\ref{fig:framework}: {\em Workload Generator}, {\em
  Schedule Predictor}, and {\em What-if Model}.

The Workload Generator replays historical job traces or synthesizes
workloads with given characteristics. The Schedule Predictor produces
the simulated task schedule of the generated workloads under given RM
configurations. The What-if Model evaluates the QS metrics for the
input SLOs using the simulated task schedule.  Together, the three
components enable the Optimizer to explore the impact of different RM
configurations on the input SLOs and use the PALD algorithm (described
in Section \ref{sec:theory}) to produce Pareto-optimal RM
configurations for these SLOs.

While proposing RM configurations in Step (3), the Optimizer
meticulously generates configurations only within a given maximum
distance to the currently used RM configuration. Tempo uses normalized
$l^2$-norm as the distance metric, and allows the DBA to specify the
maximum distance based on her risk tolerance.  This technique further
reduces the risk of causing dramatic impact on the running workloads
when applying a new RM configuration; which is particularly desirable
in production environments.

\section{QS: Quantifiable Metrics to Measure SLO Satisfaction}
\label{sec:qs}
A key design goal in Tempo was to provide a quantitative understanding
of how the workload and RM configuration impact each SLO. We developed
the QS metric which can be used to compare the relative SLO
satisfactions under different workloads and RM configurations.
Minimizing the QS metric improves the corresponding SLO. QS metrics
were motivated by the idea of loss functions in machine learning.

The QS metric for an SLO is defined as a function of the resulting
task schedule for a workload under a given RM configuration. Recall
from Section \ref{sec:resource_model} that a task schedule consists of
start time, end time, and the resource allocation $d$ for each of the
tasks run on behalf of a tenant.  For ease of exposition, $d$ can be
considered as an integer number of containers as done in RMs like
Mesos and YARN.

\subsection{QS Metrics for Popular SLOs}
\label{sec:qs-metrics}

We will now describe QS metrics for the common classes of SLOs that we
came across in our interview (recall Section \ref{sec:slos}).  Note
that SLOs and corresponding QS metrics can be defined at different
levels such as at the level of a recurring job, at the level of the
entire workload of a tenant, at the level of the entire cluster,
etc. In this section, we will define QS metrics at the job level, but
the ideas generalize.  Consider a certain interval of time $L$. Let
$J_i$ denote the set of jobs from tenant $i$ which was submitted and
completed during this interval. Let $T_i$ be the set of tasks
associated with $J_i$. Based on this notation, we can define the
following QS metrics for the common SLOs.

\vspace{1mm}
\noindent{\bf Low average job response time:} The QS metric for job
response time SLO takes the average across all jobs executed by the
tenant, as given by \eqref{eq:uds_lat} where $t^s_j$ and $t^f_j$ are
the submission and finish time of job $j$,
respectively. $\left|J_i\right|$ represents the cardinality of the job
set $J_i$.
\begin{align} \text{QS}_{\text{AJR}}\left(J_i\right) =
\frac{1}{\left|J_i\right|} \sum_{j\in J_i} \left(t^f_j -
t^s_j\right) \label{eq:uds_lat}.
\end{align}

\vspace{1mm}
\noindent{\bf Deadlines:} The QS metric for deadline SLO can be
defined as the fraction of jobs of a tenant that missed their
deadline. Let $t^d_j$ be the deadline of the job $j$, the deadline QS
metric can be defined as
\begin{align} \text{QS}_{\text{DL}}\left(J_i\right) =
\frac{1}{\left|J_i\right|} \sum_{j\in J_i} \mathbb{I}\left(t^f_j >
\gamma\left(t^f_j - t^l_j\right) + t^d_j\right) \label{eq:uds_dl},
\end{align}
where $\mathbb{I}\left(\cdot\right)$ is the indicator function, and
$\gamma$ is a slack (tolerance) when identifying the deadline
violation. That is, a job $j$ is considered violating the deadline
$t^d_j$ only if its completion is later than the deadline by a factor
$\gamma$ in terms of the job duration $t^f_j - t^l_j$.  The slack
makes the QS metric less sensitive to system variability.

\vspace{1mm}
\noindent{\bf High resource utilization:}
The resource utilization can be calculated as the integral of the
fraction of overall resources allocated to the tenant over the time
interval. This utilization amounts to the area of the shaded region in
Figure~\ref{fig:uds_util}. We can use the {\em dominant resource
  usage} when multiple resource types are considered
\cite{Ghodsi13,Ghodsi11,Shue12}.  Note that the dominant resource
usage is represented by a ratio between zero and one. When there is
only a single resource type, we normalize the resource usage.
\begin{figure}[h]
  \centering
  \includegraphics[scale=0.45]{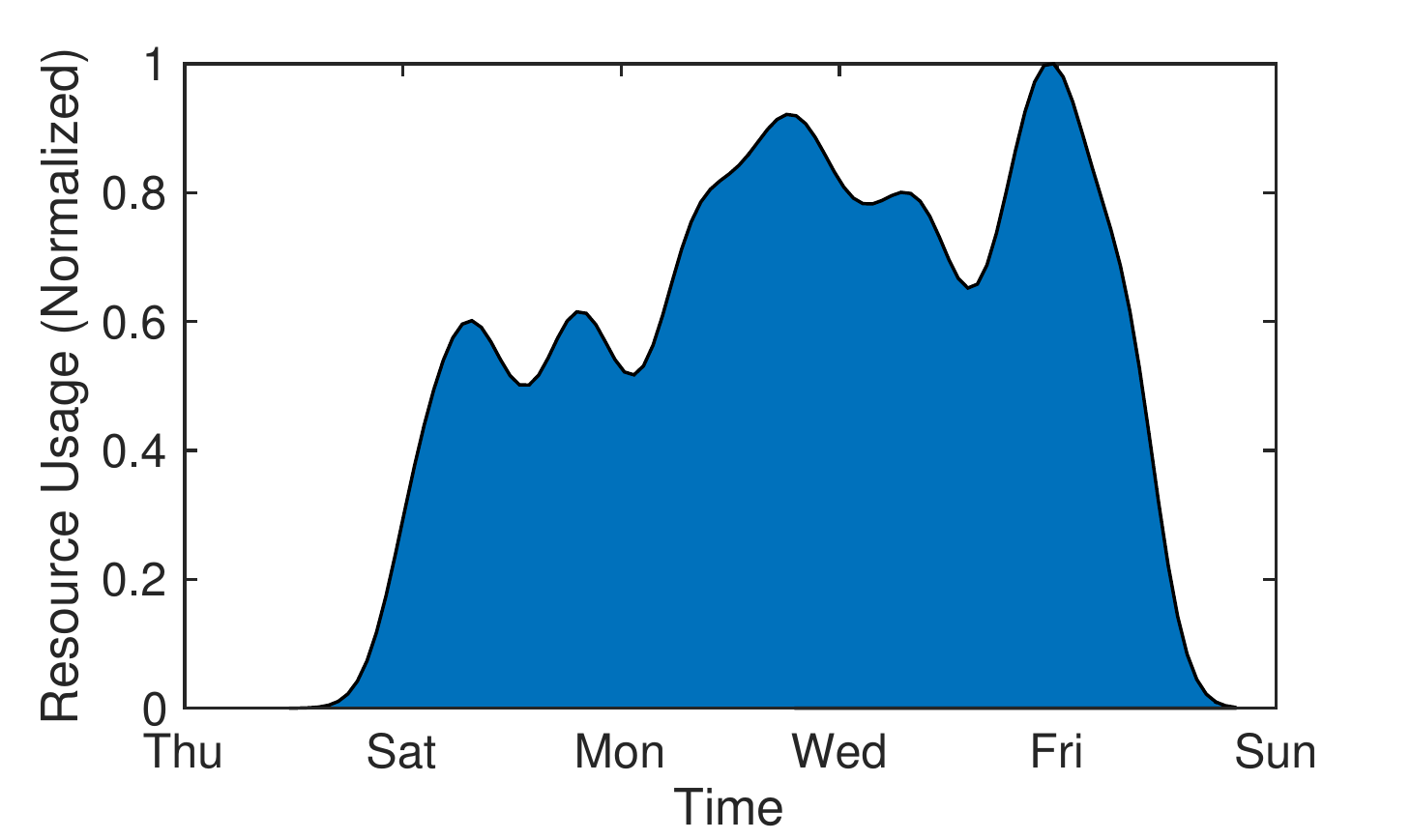}
  \vspace{-4mm}
  \caption{Normalized (dominant) resource usage over a period of
    time. The shaded area corresponds to the resources allocated to a
    particular tenant.}
  \vspace{-2mm}
  \label{fig:uds_util}
\end{figure}
Recall that the optimization minimizes the QS metrics. Thus, we can
define the QS metric for achieving high resource utilization as the
negative area in Figure~\ref{fig:uds_util}, which is given by
\begin{align}
  \text{QS}_{\text{UTIL}}\left(J_i\right) = -\frac{1}{L}\sum_{j\in
  T_i} d_j\left(t^f_t - t^l_t\right), \label{eq:uds_util}
\end{align}
where $L$ be the length of the interval, and $d_j$ is the amount of
resources allocated to task $j$. This QS metric can also be applied to
evaluate the impact of preemption, as illustrated in
Figure~\ref{fig:wasted_util}, by comparing the QS values computed
using all tasks versus using only tasks that were not preempted.

\vspace{1mm}
\noindent{\bf High job throughput:} The job throughput is defined as
the number of jobs submitted and completed within the interval. The QS
metric for achieving high job throughput is thus given by
\begin{align}
  \text{QS}_{\text{THR}}\left(J_i\right) = -\left|J_i\right| \label{eq:uds_thr}.
\end{align}

\vspace{1mm}
\noindent{\bf Resource fairness:} The fairness can be defined by
comparing the relative ratio of resource utilization used by the
tenants versus the desired ratio. This definition is also known as the
long-term fairness \cite{Tang14}. Let $c_i$ denote the desired share
of resources, the fairness QS metric follows
\begin{align*}
  \text{QS}_{\text{FAIR}}\left(J_i\right) = -\left|c_i + \text{QS}_{\text{UTIL}}\left(J_i\right)\right|.
\end{align*}

\subsection{QS Templates}

To further simplify the use of Tempo, we implemented {\em QS
  templates} to enable tenants to specify SLOs declaratively. A QS
template specifies: (a) a unique queue to which the tenant submits its
workload, (b) a predefined QS metric definition (e.g., the ones given
above), (c) one or more optional parameters associated with the
corresponding SLO (e.g., the value of a deadline, or a threshold on
job response time), and (d) an optional priority value (priorities are
incorporated by multiplying the QS metric with the priority value).
Tempo's QS templates make it easy to make the RM aware of SLOs such
as: ``Average job response time of tenant A must be less than two
minutes'', and ``No more than 5\% of tenant B's jobs can miss the
deadline of 1 hour.''

\section{Tempo's Theoretical Foundations}
\label{sec:theory}

\subsection{Multi-objective QS Optimization Problem}
\label{sec:multi-obj}

Tempo's Optimizer is given SLOs by the tenants and solves the
following problem:

\begin{align*}
  \text{minimize}
  \quad &\mathbb{E}\left[\left(f_1\left(\pmb{x};w\right),
          \cdots, f_k\left(\pmb{x};w\right)\right)\right] \label{eq:orig}
          \tag{SP1}\\ \text{subject to} \quad
        &\mathbb{E}\left[f_i\left(\pmb{x};w\right)\right] \leq r_i &
                                                                     \forall
                                                                     i=1,2,\cdots,k. \\ &\pmb{x} \in \mathcal{X}.
\end{align*}

Here,
$f_1\left(\pmb{x};w\right), f_2\left(\pmb{x};w\right), \cdots,
f_k\left(\pmb{x};w\right)$ denote QS functions for the $k$ SLOs from
tenants, under workload $w$ and RM configuration $\pmb{x}$. Since
measurements of the QS metrics will be noisy in a production database
system, the expectation $\mathbb{E}\left(\cdot\right)$ in
\eqref{eq:orig} is to average out the impact of noise. The vector
minimization in \eqref{eq:orig} is in Pareto optimal sense: an RM
configuration $\pmb{x}$ is said to dominate another configuration
$\pmb{x\prime}$, if for all $i\in\left[k\right]$,
$f_i\left(\pmb{x};w\right)\leq f_i\left(\pmb{x\prime};w\right)$, with
at least one inequality. An RM configuration is called weak
Pareto-optimal if no other RM configuration dominates it.
$\mathcal{X}$ is the RM configuration space defined in Section
\ref{sec:resource_model}. The $r_i$ values are used to specify
constraints that are part of SLOs.  Tempo converts all SLOs into such
constraints so that a solution to \eqref{eq:orig} will enforce all
SLOs.  In order to keep improving on the current RM configuration as
part of self-tuning, Tempo's control loop can use the QS value
attained for an SLO at the current configuration as the $r_i$ for the
next iteration.

As an example, let us consider two tenants {\em A} and {\em B}.  {\em
  A} has a deadline SLO while {\em B} is a best-effort tenant that
cares about getting the least possible job response times. For Tempo's
Optimizer, {\em A}'s deadline SLO will become a constraint
$\mathbb{E}\left[f_1\left(\pmb{x};w\right)\right] \leq r_1$, where
$r_1$ is {\em A}'s tolerable fraction of deadline violations and $f_1$
is the QS function for deadline SLO from Section \ref{sec:qs-metrics}.
{\em B}'s SLO for getting the lowest possible job response time can be
expressed as a constraint
$\mathbb{E}\left[f_2\left(\pmb{x};w\right)\right] \leq r_2$, where
$f_2$ is the QS function for response time SLO from Section
\ref{sec:qs-metrics} and $r_2$ is the average job response time for
{\em B}'s jobs obtained with the current RM configuration.

One can also prioritize certain SLOs over others in \eqref{eq:orig} by
weighting the corresponding QS functions.  For instance, to promote
the priority of an SLO whose QS is $f_i\left(\pmb{x};w\right)$, we can
replace the QS with $\alpha f_i\left(\pmb{x};w\right)$, where
$\alpha > 1$ is the magnitude of the promotion.

\subsection{Goals and Notation}

We now present a novel {\em PAreto Local Descent (PALD)} algorithm for
solving the multi-objective QS optimization problem \eqref{eq:orig}.
QS optimization poses two challenges which are not fully addressed by
existing methods: 1) the QS measurements are noisy due to inaccuracies
in job traces, choice of time intervals, etc.; and 2) QS metrics are
expensive to estimate as each prediction of a QS metric involves a
task scheduling simulation. Existing approaches roughly fall into
three classes. The first class uses evolutionary algorithms
\cite{Deb02,Deb06,Know06}, which are sensitive to noise and require
extensive QS predictions. The second class is prediction-based
\cite{Zulu13}, which inherently does not guarantee an optimal
solution. A third class achieves objectives which are different from
Pareto-optimality \cite{Mahd13,Kasi13}. In the following sections, we
describe PALD, and prove that it can handle the two challenges.

We denote vectors and matrices by boldface symbols. The simplified
notations $f_i$ and $f_i\left(\pmb{x}\right)$ are used interchangeably
to refer to the QS metric function $f_i\left(\pmb{x};w\right)$, and we
use $\pmb{f}\left(\pmb{x}\right)$ to refer to the vector of QS
functions. For each QS metric, we denote the average of $N$ measures
by $f_i\left(\pmb{x}\right)$.

The goal of PALD is to find a weak Pareto-optimal solution to
\eqref{eq:orig}. If a feasible solution exists, then the resulting RM
configuration satisfies the ``hard'' SLOs represented by the
constraints in \eqref{eq:orig}, while improving the ``best-effort''
SLOs. If there is no feasible solution, then the resulting RM
configuration balances the SLOs represented by the constraints based
on max-min fairness. This feature supports prioritizing the SLOs by
weighting the corresponding constraints.

\subsection{Proxy Model}
The key technique used in PALD is a proxy model, which transforms the
original problem \eqref{eq:orig} to a proxy problem \eqref{eq:proxy}
such that all solutions to the proxy problem are solutions to the
original one, but not the other way around. We show that the proxy
problem can be solved efficiently.

First, it should be noted that the well-known {\em weighted sum
  scalarization} (\cite{Boyd04})---which converts the
multi-dimensional QS vector to a scalar by taking a weighted sum of
the QS functions---does not apply in this case; for it does not ensure
the first set of constraints in the problem \eqref{eq:orig}. For
example, consider two RM configurations and two QS functions. Suppose
that the QS vectors corresponding to the two solutions are
$\left(5,5\right)^\top$ and $\left(0,7\right)^\top$, respectively. Let
$\pmb{r}=\left(6,6\right)^\top$. When the weights are equal, the
optimization using weighted sum scalarization yields the QS vector
$\left(0,7\right)^\top$, which does not dominate
$\pmb{r}=\left(6,6\right)^\top$.

Our solution PALD solves the following proxy problem:
\begin{align*}
  \text{minimize} \quad
  & \pmb{c}^{\top}\left[\pmb{f}\left(\pmb{x}\right)
    -\rho\max\left(\pmb{f}\left(\pmb{x}\right),\pmb{r}\right)\right]
    \label{eq:proxy}\tag{SP2}\\
  \text{subject to} \quad
  & \mathbb{E}\left[f_i\left(\pmb{x}\right)\right] \leq
    r_i &\forall i=1,2,\cdots,k. \\
  & \pmb{x} \in \mathcal{X}.
\end{align*}
Here, $\pmb{c}$, which is a positive vector, and $\rho < 1$ are two
parameters whose values will be described in Section
\ref{sec:params}. The parameter $\rho$ penalizes those QS functions
$f_{i}\left(\pmb{x}\right)>r_i$, and is independent of the vector
$\pmb{c}$. This is an advantage over {\em conic scalarization}
\cite{Kasi13}. One special case is that when $\rho=0$, the problem
\eqref{eq:proxy} becomes the weighted sum scalarization.

\begin{theorem}
  \label{thm:1} For any arbitrary positive vector $\pmb{c}$ and
parameter $\rho < 1$, every solution of \eqref{eq:proxy} is a solution
for \eqref{eq:orig}.
\end{theorem}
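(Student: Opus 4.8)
The plan is to reduce the statement to a single monotonicity fact about the scalarizing functional of \eqref{eq:proxy} and then close with a one-line contradiction. First I would note that \eqref{eq:orig} and \eqref{eq:proxy} have \emph{identical} feasible sets: both impose $\mathbb{E}\left[f_i\left(\pmb{x}\right)\right]\le r_i$ for all $i$ and $\pmb{x}\in\mathcal{X}$. Hence ``$\pmb{x}$ is a solution of \eqref{eq:proxy}'' means $\pmb{x}$ is feasible and minimizes $g\left(\pmb{x}\right):=\pmb{c}^{\top}\left[\pmb{f}\left(\pmb{x}\right)-\rho\max\left(\pmb{f}\left(\pmb{x}\right),\pmb{r}\right)\right]$ over that common set, while ``$\pmb{x}$ is a solution of \eqref{eq:orig}'' means $\pmb{x}$ is feasible and is dominated by no other feasible configuration. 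So the theorem is exactly the claim that any minimizer of $g$ over the feasible set is non-dominated, which is a textbook scalarization argument once $g$ is shown to be strictly increasing with respect to the componentwise order on QS vectors.

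The key step is therefore the analysis of the scalar map. I would write $g=h\circ\pmb{f}$ with $h\left(\pmb{y}\right)=\sum_{i=1}^{k} c_i\,\phi_{r_i}\left(y_i\right)$ and $\phi_{r}\left(t\right):=t-\rho\max\left(t,r\right)$, and examine $\phi_r$ piecewise: for $t\le r$ it equals $t-\rho r$, with slope $1>0$; for $t\ge r$ it equals $\left(1-\rho\right)t$, with slope $1-\rho>0$ precisely because $\rho<1$; and the two pieces agree at $t=r$. Thus each $\phi_{r_i}$ is strictly increasing on $\mathbb{R}$, and since every $c_i>0$, the separable sum $h$ is strictly increasing in each coordinate. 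Consequently, if $\pmb{y}'\le\pmb{y}$ componentwise with $\pmb{y}'\ne\pmb{y}$ (at least one inequality strict) then $h\left(\pmb{y}'\right)<h\left(\pmb{y}\right)$; that is, $h$ strictly respects the domination partial order.

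The conclusion then follows by contradiction. Suppose $\pmb{x}^{\star}$ solves \eqref{eq:proxy} but is dominated in \eqref{eq:orig} by some feasible $\pmb{x}'$. By the definition of domination, $f_i\left(\pmb{x}'\right)\le f_i\left(\pmb{x}^{\star}\right)$ for all $i$ with at least one strict inequality, i.e.\ $\pmb{f}\left(\pmb{x}'\right)\le\pmb{f}\left(\pmb{x}^{\star}\right)$ and $\pmb{f}\left(\pmb{x}'\right)\ne\pmb{f}\left(\pmb{x}^{\star}\right)$. Monotonicity of $h$ gives $g\left(\pmb{x}'\right)=h\left(\pmb{f}\left(\pmb{x}'\right)\right)<h\left(\pmb{f}\left(\pmb{x}^{\star}\right)\right)=g\left(\pmb{x}^{\star}\right)$, and since $\pmb{x}'$ is feasible this contradicts optimality of $\pmb{x}^{\star}$ for \eqref{eq:proxy}. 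Hence $\pmb{x}^{\star}$ is non-dominated, and therefore a solution of \eqref{eq:orig}.

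I expect the only real content to be the piecewise check that $\phi_r$ is \emph{strictly} increasing, since this is exactly where the hypothesis $\rho<1$ is consumed (at $\rho=1$ the upper piece becomes flat, and one would only recover non-dominance in the weak sense). Everything else --- coincidence of the feasible regions, separability of $h$, and the contradiction step --- is routine. One minor point I would state explicitly is that the noisy/expectation formulation is harmless for this theorem: the argument treats $\pmb{f}\left(\pmb{x}\right)$ as the (expected) QS vector that appears identically in both problems, so no probabilistic reasoning is needed here.
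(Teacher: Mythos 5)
Your proof is correct and follows essentially the same route as the paper's: both establish that the proxy objective is a strictly increasing function of the QS vector (your piecewise check of $\phi_{r}$ is just the per-coordinate version of the paper's split of the sum into the indices with $f_i\left(\pmb{x}\right)\leq r_i$ and those with $f_j\left(\pmb{x}\right)>r_j$), and both then close with the identical contradiction against domination. Your write-up is a bit more explicit about where $\rho<1$ is consumed and about the two problems sharing a feasible set, but the argument is the same.
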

\begin{proof} Let $s_{i}\left(\pmb{x}\right) =
c_{i}\left[f_{i}\left(\pmb{x}\right) -
\rho\max\left(f_{i}\left(\pmb{x}\right), r_i\right)\right]$, the
objective of problem \eqref{eq:proxy} can be written as
  \begin{align} \sum_{i}s_{i}\left(\pmb{x}\right) & =
\sum_{i:f_{i}\left(\pmb{x}\right)\leq r_i}s_{i} \left(\pmb{x}\right) +
\sum_{j:f_{j}\left(\pmb{x}\right)>r_j}
s_{j}\left(\pmb{x}\right)\nonumber \\ & =
\sum_{i:f_{i}\left(\pmb{x}\right)\leq
r_i}c_{i}\left[f_{i}\left(\pmb{x}\right)-\rho
r_i\right]+\label{eq:s1}\\ & \phantom{=}
\sum_{j:f_{j}\left(\pmb{x}\right)>r_j}c_{j}
\left(1-\rho\right)f_{j}\left(\pmb{x}\right)\label{eq:s2}.
  \end{align} Both \eqref{eq:s1} and \eqref{eq:s2} are strictly
monotonically increasing with respect to $f_i\left(\pmb{x}\right)$, so
is the objective \eqref{eq:proxy}. Consider a solution $\pmb{x}$ for
\eqref{eq:proxy}. Suppose that $\pmb{x}$ is not a weak Pareto-optimal
solution for \eqref{eq:orig}, then there exists another weak
Pareto-optimal solution $\pmb{x\prime}$ for the problem
\eqref{eq:orig} that dominates $\pmb{x}$. However, this contradicts
the hypothesis that $\pmb{x}$ is a solution for the problem
\eqref{eq:proxy}, due to the monotonicity.
\end{proof}

\subsubsection{Parameters}
\label{sec:params} We now derive the parameters $\pmb{c}$ and $\rho$
in the proxy model \eqref{eq:proxy}. PALD uses the stochastic gradient
descent for solving the proxy problem, and the gradients are estimated
using the well-known LOESS \cite{Clev88}. Let $s\left(\pmb{x}\right)$
denote the objective of the proxy problem \eqref{eq:proxy}, the update
for each iteration is given by
\begin{equation}
\pmb{x}^{new}=\pmb{x}^{old}-\alpha\pmb{\nabla_{x}}s\left(\pmb{x}\right),
  \label{eq:sgd}\tag{SGD}
\end{equation} where $\alpha$ is the step size. The parameters
$\pmb{c}$ and $\rho$ are chosen such that the above update does not
increase those QS functions $f_{i}\left(\pmb{x}\right)\geq r_i$. We
thereby obtain $\forall i:f_{i}\left(\pmb{x}\right)\geq r_i$,
$-\alpha\pmb{\nabla_{x}}^{\top}f_{i}\left(\pmb{x}\right)
\pmb{\nabla_{x}}s\left(\pmb{x}\right)\leq0$, or equivalently
$\pmb{\nabla_{x}}^{\top}f_{i}\left(\pmb{x}\right)
\pmb{\nabla_{x}}s\left(\pmb{x}\right)\geq0$.  The parameters are also
chosen to best improve those violated constraints
$f_{i}\left(\pmb{x}\right)\geq r_i$. Fixing $\pmb{c}$, $\rho$ is
obtained by solving
\begin{align} &\arg\max_{\rho} \min_{i:f_i\left(\pmb{x}\right)\geq
r_i}
\pmb{\nabla_{x}}^{\top}f_{i}\left(\pmb{x}\right)\pmb{\nabla_{x}}s\left(\pmb{x}\right) \label{lp:param}\tag{RHO}\\
\text{s.t.}  \quad
&\pmb{\nabla_{x}}^{\top}f_{i}\pmb{\nabla_{x}}s\left(\pmb{x}\right)\geq0,
\quad \forall i:f_{i}\left(\pmb{x}\right)\geq r_i \nonumber\\
&\pmb{c}\geq 0, \quad \rho < 1. \nonumber
\end{align} Note that the objective of the proxy model
\eqref{eq:proxy} is not differentiable at points $\left\{\pmb{x}\in
\mathcal{X}:\pmb{f}\left(\pmb{x}\right) = \pmb{r}\right\}$, and we
need to condition on the subgradients. Let us first assume that
$\left.\partial s\left(\pmb{x}\right)/\partial
f_{j}\left(\pmb{x}\right)\right|_{x=r}=c_{j}\left(1-\rho\right)$.  The
objective of the problem \eqref{lp:param} can be rewritten as
\begin{align}
\sum_{j}c_{j}\pmb{\nabla_{x}}^{\top}f_{i}\pmb{\nabla_{x}}f_{j} -
\rho\sum_{j:f_{j}\left(\pmb{x}\right)\geq
r_j}c_{j}\pmb{\nabla_{x}}^{\top}f_{i}\pmb{\nabla_{x}}f_{j}. \label{eq:sign}
\end{align}

Based on the range of the subgradient of $s\left(\pmb{x}\right)$, we
can bound $\rho$. To satisfy the first set of constraints in the
problem \eqref{lp:param} at an indifferentiable point, we have that
\begin{equation} \min_{i,\frac{\partial s}{\partial
f_{j}}:f_{i}\left(\pmb{x}\right)\geq
r_i}\sum_{j}\pmb{\nabla_{x}}^{\top}f_{i}\pmb{\nabla_{x}}f_{j}\frac{\partial
s}{\partial f_{j}}\geq0\label{eq:rho}.
\end{equation} Now consider separately two cases $\rho \geq0$ and
$\rho <0$. When $\rho \geq 0$ the inequality \eqref{eq:rho} is
equivalent to $\forall i:\pmb{\nabla_{x}}f_{i}\neq \pmb{0} \land
f_i\left(\pmb{x}\right)\geq r_i$ that
\begin{align*}
\left(1-\rho\right)c_i\pmb{\nabla_{x}}^{\top}f_{i}\pmb{\nabla_{x}}f_{i}
& \geq -\sum_{j:j\neq i} \min_{\partial s/\partial f_{j}}
\pmb{\nabla_{x}}^{\top}f_{i}\pmb{\nabla_{x}}f_{j}\frac{\partial
s}{\partial f_{j}}\\ & = -\left(1-\rho\right)\sum_{\substack{j:j\neq
i,\\\pmb{\nabla_{x}}^{\top}f_{i}\pmb{\nabla_{x}}f_{j}\geq0}}c_j\pmb{\nabla_{x}}^{\top}f_{i}\pmb{\nabla_{x}}f_{j}\\
& \phantom{=} -\sum_{\substack{j:j\neq
i,\\\pmb{\nabla_{x}}^{\top}f_{i}\pmb{\nabla_{x}}f_{j}<0}}c_j\pmb{\nabla_{x}}^{\top}f_{i}\pmb{\nabla_{x}}f_{j},
  \end{align*} which simplifies to
\begin{align*}
0\leq\rho\leq\min_{\substack{i:\pmb{\nabla_{x}}f_{i}\neq
\pmb{0},\\f_i\left(\pmb{x}\right)\geq r_i}}
\frac{\sum_{j}c_j\pmb{\nabla_{x}}^{\top}f_{i}\pmb{\nabla_{x}}f_{j}}{\sum_{j:\pmb{\nabla_{x}}^{\top}f_{i}\pmb{\nabla_{x}}f_{j}\geq0}c_j\pmb{\nabla_{x}}^{\top}f_{i}\pmb{\nabla_{x}}f_{j}}.
\end{align*} Similarly, we can obtain the bound for the case $\rho <
0$. It should be noted that these bounds are useful only when the
following conditions are satisfied:
\begin{align}
\sum_{j}c_j\pmb{\nabla_{x}}^{\top}f_{i}\pmb{\nabla_{x}}f_{j}\geq 0,
\quad \forall i:\pmb{\nabla_{x}}f_{j}\neq\pmb{0}\land
f_i\left(\pmb{x}\right)\geq r_i.
  \label{eq:cond}
\end{align} These conditions can be satisfied for convex QS functions,
using the vector $\pmb{c}$ described in MGDA \cite{mgda12}. Combining
the results arrives at the optimal choice of $\rho$ for the problem
\eqref{lp:param}:
\[ \rho_{*}=\begin{cases} \min\limits_{\substack{
i:\pmb{\nabla_{x}}f_{j}\neq\pmb{0},\\ f_i\left(\pmb{x}\right)\geq r_i
}}\frac{\sum_{j}c_{j}\pmb{\nabla_{x}}^{\top}f_{i}\pmb{\nabla_{x}}f_{j}}{\sum_{j:\pmb{\nabla_{x}}^{\top}f_{i}\pmb{\nabla_{x}}f_{j}\geq0}c_{j}\pmb{\nabla_{x}}^{\top}f_{i}\pmb{\nabla_{x}}f_{j}},
& \rho\geq0\\
\max\limits_{\substack{i:\pmb{\nabla_{x}}f_{j}\neq\pmb{0},\\
f_i\left(\pmb{x}\right)\geq r_i
}}\frac{\sum_{j}c_{j}\pmb{\nabla_{x}}^{\top}f_{i}\pmb{\nabla_{x}}f_{j}}{\sum_{j:\pmb{\nabla_{x}}^{\top}f_{i}\pmb{\nabla_{x}}f_{j}<0}c_{j}\pmb{\nabla_{x}}^{\top}f_{i}\pmb{\nabla_{x}}f_{j}},
& \rho<0.
\end{cases}
\] The sign of the parameter $\rho$ depends on the last term of the
objective \eqref{eq:sign} as to maximize the objective. To deliver the
above optimal $\rho_*$, the vector $\pmb{c}$ must also satisfy the
conditions \eqref{eq:cond}.

To achieve max-min fairness of SLOs, PALD chooses $\pmb{c}$ that
improves the most violated constraint, through the following linear
program.
\begin{align*} \text{maximize} \quad &z\\ \text{subject to} \quad
&\pmb{J}_{i:f_i\left(\pmb{x}\right)\geq r_i} \pmb{J}^\top \pmb{c} \geq
z \pmb{1}\\ \quad & \pmb{c} \geq 0, \quad z \leq \epsilon.
\end{align*} Here, $\pmb{J}$ is the Jacobian of the QS vector, and
$\pmb{J}_{i:f_i\left(\pmb{x}\right)\geq r_i}$ denotes the rows of the
Jacobian $\pmb{J}$ indexed by $i:f_i\left(\pmb{x}\right)\geq r_i$.
$\epsilon$ is an arbitrary positive constant, and the solution vector
$\pmb{c}$ is normalized using any desirable metrics such as the
$l^2$-norm. The first set of constraints correspond to the QS
functions $f_i\left(\pmb{x}\right)\geq r_i$, and these are the only QS
functions that need to be convex in PALD. Thus, PALD provides better
support for non-convex QS optimization as compared to MGDA. Moreover,
randomly choosing different initial points can also help deal with
non-convex QS functions in this sense.

\section{What-if Model}
Tempo's Optimizer depends on the What-if Model to predict the values
of QS metrics for a given workload under a given RM configuration.
The What-if Model breaks each prediction into two steps and leverages
the Workload Generator and Schedule Predictor respectively for these
steps. Note that the QS metric is expressed as a function of the task
schedule of the workload under the given RM configuration.  The
Workload Generator is responsible for generating the workload, and the
Schedule Predictor is responsible for generating the task schedule
given the workload and the RM configuration.

\subsection{Workload Generation}
There are two ways to make workload information available to Tempo:
replaying historical traces or using a statistical model of the
workload (see Figure \ref{fig:framework}).  The statistical model,
which is usually trained from historical traces, has some key
advantages.  The model can be used to generate multiple synthetic
workloads with the same distribution in order to test the sensitivity
of parameter settings. More importantly, the model can be used to
generate synthetic workloads with extended characeristics such as a
growth in data size by 30\%. For example, we developed a statistical
model based on one month of historical traces from Company ABC's
production database workload. The workload distributions from Company
ABC (reported further in the evaluation section) are similar to the
distributions described in \cite{Ren12}. In particular, the task
duration approximately follows a lognormal distribution, and the job
arrival approximately follows a Poisson process.

\subsection{Fast Schedule Prediction}
\label{sec:sched_sim}
Given a workload generated as above, the Schedule Predictor in Figure
\ref{fig:framework} estimates the task schedule of the workload under
a given RM configuration.  Since the What-if Model needs to explore
the impact of many different RM configurations, fast prediction of
schedules can speed up Tempo's optimization process significantly.

For very fast task schedule simulation, we implemented a Schedule
Predictor for the RMs used in Hadoop, Spark, and YARN using time warp
mechanism \cite{jefferson82}. Our implementation computes the cluster
resource usage at only the submission time, tentative finish time, and
possible preemption time of each task, based on the workload
information and RM configuration parameter settings. This technique
helps the Predictor get rid of actually running the tasks as well as
synchronization within the RM.

To extend to other RMs, Tempo can leverage existing RM simulators that
have already been developed for several popular systems, such as Borg
\cite{abhi15}, Apollo \cite{Boutin14}, Omega \cite{Schw13}, MapReduce
\cite{Verma11,Hero11,Hammoud10}, and YARN \cite{SLS}.  Most of these
existing simulators are designed to reproduce the real-time behavior
of the RM, which is a superset of our goal of computing the task
schedule efficiently.

\section{Evaluation}
\label{sec:eval}

The evaluation consists of two parts. The first part validates the
components of Tempo based on real workload traces from the production
database at Company ABC running on Hadoop. The second part does an
end-to-end evaluation of Tempo using production workload traces from
Company ABC, Facebook, and multiple customers of Cloudera
\cite{Chen12}.  We apply Tempo to four real-life scenarios and show,
respectively, the improvements in job response time, resource
utilization, adaptivity to workloads variations, and predictive
resource provisioning.  In the experiments where a baseline
performance is needed for comparison, we used resource allocations as
determined by expert DBAs and cluster operators in Company ABC. The
following insights emerge from the evaluation:

\squishlist
\item Tempo can tailor the resource allocation to SLO-driven
  business-critical workloads, and offers tenants the freedom to
  specify SLOs.
\item Tempo improves the resource utilization by 15\%, and job
  response time for best-effort tenants by 50\% under 25\% slack
  without breaking the deadlines for production workloads.
\item Tempo effectively adapts to workload variations by periodically
  updating the RM configuration using a small window of most recent
  traces.
\item Tempo can further help DBAs and cluster operators determine the
  appropriate cluster size for their multi-tenant parallel database
  for the given SLOs and workloads, minimizing the overall resource
  costs.
\squishend

These results are due to: 1) informed resource allocation which takes
into account the workload characteristics revealed from historical
traces; and 2) optimized RM configurations aiming specifically for the
SLOs because Tempo makes the connection between the RM configuration
and SLOs more transparent and predictable.

\subsection{Validating the schedule prediction}
We begin by validating the task schedule prediction on a 700-node
production cluster at Company ABC. In particular, we measure the
accuracy of the prediction using one week's production workload from
six independent tenants, as described in Table~\ref{tbl:rf_users}. The
workload consist of approximately 60,000 jobs and 35 million
production tasks collected in a noisy environment where there were job
and task failures, jobs killed by users and DBAs, and
node blacklisting and restarts. Figure~\ref{fig:rf_keystat} shows the
key statistics of the workload.

\begin{figure*}
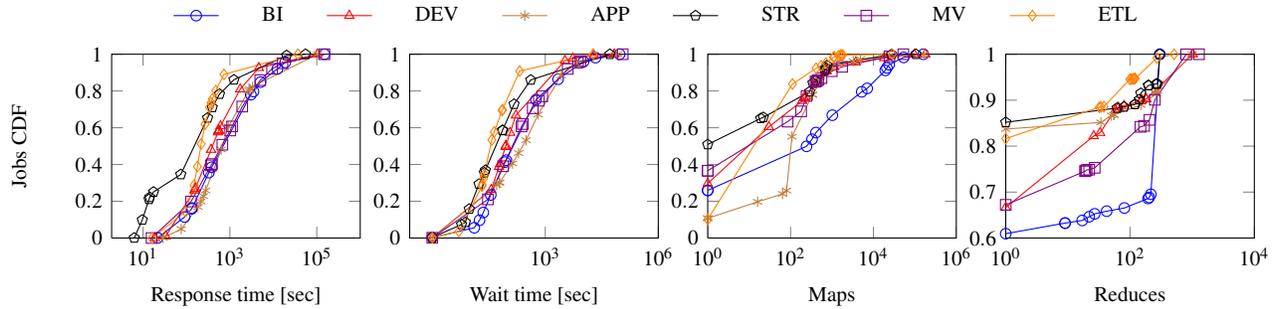

  \centering
  \input keystat.tex
  \vspace{-4mm}
  \caption{Key statistics of Company ABC's workloads.}
  \vspace{-2mm}
  \label{fig:rf_keystat}
\end{figure*}

The schedule prediction for the 35 million tasks from six tenants
takes just 4 minutes, or approximately 150,000 tasks per second. We
compare the predicted task schedule and the observed schedule based on
the traces, and compute the prediction error. Both the relative
absolute error (RAE) and the relative squared error (RSE) are used as
the error metrics. The RAE and RSE of tenant $i$ are defined
respectively as
\begin{align*}
  \begin{array}{cc}
    \text{RAE}_{i}= \frac{\sum_{j}\left|p_{ij}-l_{ij}\right|} {\sum_{j}\left|l_{ij}-\mathbb{E}_j\left[l_{ij}\right]\right|},
    &
      \text{RSE}_{i}=
      \sqrt{\frac{\sum_{j}\left(p_{ij}-l_{ij}\right)^{2}} {\sum_{j}\left(l_{ij}-\mathbb{E}_j\left[l_{ij}\right]\right)^{2}}}\end{array}.
\end{align*}
Here $p_{ij}$ and $l_{ij}$ represent the predicted and observed finish
time of job $j$ for tenant $i$, respectively. Table~\ref{tbl:pred_err}
gives the RAE and RSE for the estimated job finish time. As can be
seen, the highest error (24.4\%) was incurred for the MV tenant in
Company ABC. Most jobs from MV were long-running jobs, especially with
large duration of reduce tasks. We observed a considerable amount of
killed reduce tasks for MV due to preemptions. For killed and failed
tasks, the task start time and finish time are not recorded accurately
in workload traces; which explains why MV has a higher prediction
error than others.

\begin{table}
  \centering
  {\small
    \begin{tabular}{c c c|c c c}
      Tenant & RAE & RSE & Tenant & RAE & RSE\\ \hline \hline
      BI & 0.1585 & 0.2210 & STR & 0.1610 & 0.1463\\
      \hline DEV & 0.2195 & 0.2267 & MV & 0.2318 & 0.2437\\
      \hline APP & 0.1812 & 0.1599 & ETL & 0.1210 & 0.1908\\
    \end{tabular}
    \vspace{-2mm}
    \caption{Job finish time estimation errors for each tenant.}
    \vspace{-4mm}
  }
  \label{tbl:pred_err}
\end{table}

\subsection{End-to-end evaluation}
The end-to-end experiments involve four real-life scenarios, and were
performed on a 20-node Amazon EC2 cluster with m3.xlarge
instances. The production workload traces from Company ABC, Facebook,
and Cloudera customers were scaled and replayed on the EC2 cluster
using SWIM \cite{Chen12}. In addition, the initial RM configuration
was derived directly from the expert one created by DBAs for Company
ABC's production database.  As shown in the previous experiment, the
schedule prediction for processes 150,000 tasks per second. Each
end-to-end experiment involves approximately 30,000 tasks from two
tenants, and each Tempo control loop explores 5 RM configuration
candidates. Thus, one Tempo control loop requires prediction for
roughly 150,000 tasks, which takes one second.

\subsubsection{Mix of deadline-driven and best-effort workloads}
The first scenario involves two tenants running workloads which come
with the deadline SLO specified with $\text{QS}_{\text{DL}}$, and the
low average job response time SLO specified with
$\text{QS}_{\text{AJR}}$, respectively. The experiment aimed to obtain
an RM configuration which is better than the expert one used in
production. In particular, under the new RM configuration, {\em every}
job from the deadline-driven workload must complete no later than the
completion of the same job under the expert RM configuration. This is
a strict constraint, where the deadlines in $\text{QS}_{\text{DL}}$
are given by the completion times of deadline-driven jobs under the
expert RM configuration, and the corresponding $r_i = 0$ (for 0\%
deadline violations). Another constraint involving
$\text{QS}_{\text{AJR}}$ enforces that the average job response time
of the best-effort workloads under the new RM configuration cannot be
greater than the average job response time ($r_i$) under the expert
configuration.

When counting the number of deadline violations, a 25\% slack, i.e.,
$\gamma = 0.25$, is used in $\text{QS}_{\text{DL}}$ to reduce the
sensitivity to noise, since the workloads under the same RM
configuration with a slack 0 ($\gamma = 0$ in $\text{QS}_{\text{DL}}$)
can yield a large deadline violation fraction (up to 83\%).

\begin{figure}[h]
  \centering
  \includegraphics[scale=0.44]{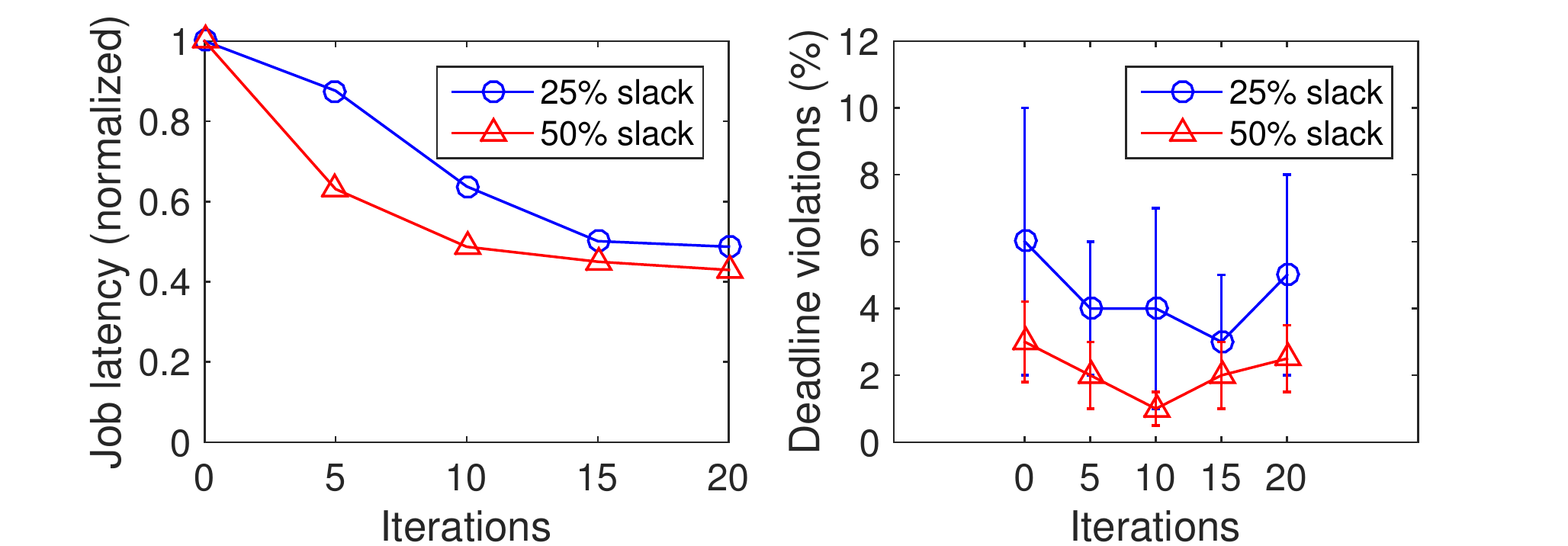}
  \vspace{-8mm}
  \caption{Average job response time for the best-effort tenant (left)
  and fraction of deadline violations for the deadline-driven tenant
  (right) at each iteration.}
\vspace{-2mm}
\label{fig:case1}
\end{figure}

Figure~\ref{fig:case1} shows the SLOs (QS values) at each iteration in
the Tempo control loop. At the iteration 0, the initial expert RM
configuration was used. The RM configuration was then iteratively
optimized by the Tempo control loop. It can be seen that, at
convergence, the improvements in average job response time of the
best-effort tenant are 50\% and 58\% for 25\% and 50\% slack,
respectively. The gap between the improvements is relatively small,
i.e., 8\%. One reason is that both improvements benefited from the
reduced contention for resources, which is confirmed in the next
experiment. In addition, the fraction of deadline violations first
drops and then breaks even at convergence. This trend is due to the
fact that once the Pareto frontier is reached, we cannot improve one
SLO without sacrificing another.

\subsubsection{Improving resource utilization}
In addition to the previous scenario, this experiment considered a
third SLO, high resource utilization, which is specified with
$\text{QS}_{\text{UTIL}}$. We focused exclusively on MapReduce
workloads due to the observation of significant task preemptions in
production. The experiment added two constraints corresponding to the
map container utilization and reduce container utilization,
respectively. The $r_i$'s were set according to the measured map and
reduce container utilization under the expert RM configuration. The
results show fewer preemptions under the Tempo optimized RM
configuration as well as improvements in job response time subject to
the deadline SLOs.

As we discussed, preemption happens when a tenant has been starved for
a certain period of time (the configured preemption timeout), killing
a certain number of most recently launched tasks from other
tenants. Thus, preemption results in lost work and decreased resource
utilization. Each tenant can specify a per-tenant preemption timeout
in the RM configuration, and these settings are difficult to get right
due to their complex connections to workloads and SLOs.

\begin{figure}[h]
  \centering \includegraphics[scale=0.43]{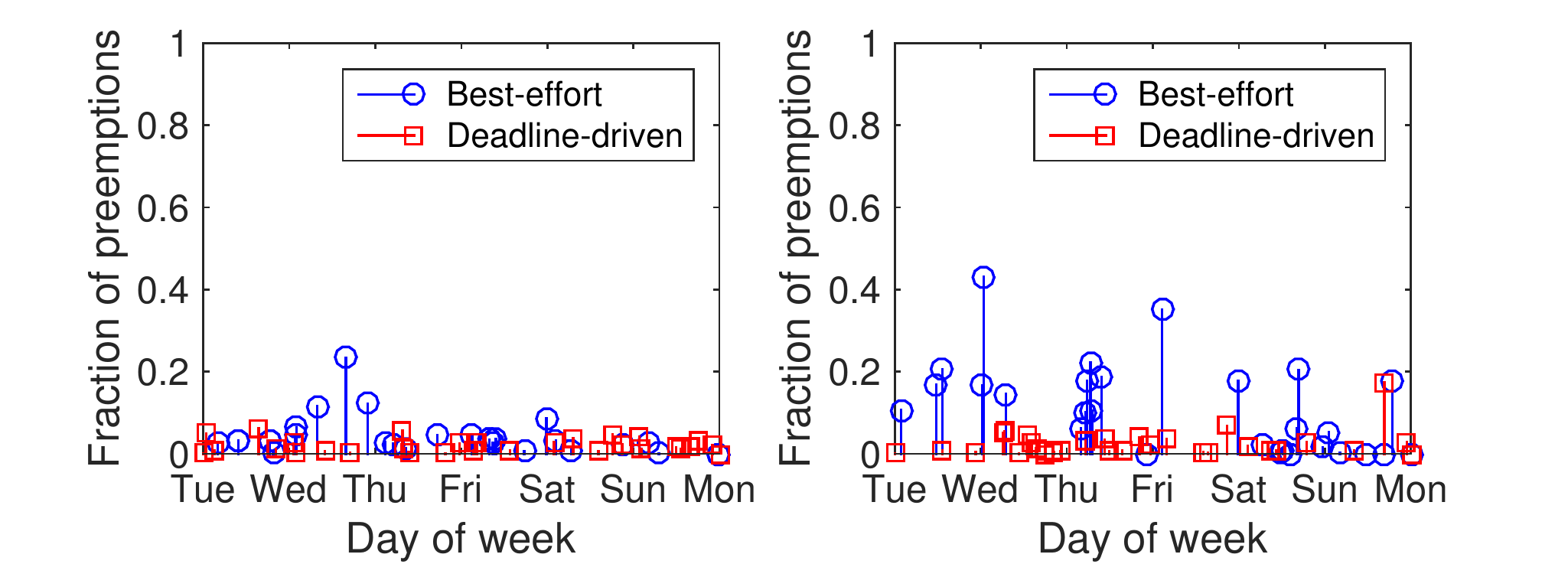}
  \vspace{-8mm}
  \caption{Task preemptions for MapReduce workloads at Company ABC. On
    the left shows the preempted map tasks, and the preempted reduce
    tasks are given on the right.}
  \vspace{-2mm}
  \label{fig:case2_preempt}
\end{figure}

We observed a significant number of preempted MapReduce tasks on the
production cluster at Company ABC. Figure~\ref{fig:case2_preempt}
shows the map and reduce preemptions over the period of one
week. During this period, 6\% map tasks and 23\% reduce tasks had been
preempted, and the reduce preemptions were mostly from the best-effort
tenant. The main reason was that the workloads of the best-effort
tenant contain mostly long-running reduce tasks, as shown in
Figure~\ref{fig:case2_taskdur}.

\begin{figure}[h]
  \centering \includegraphics[scale=0.44]{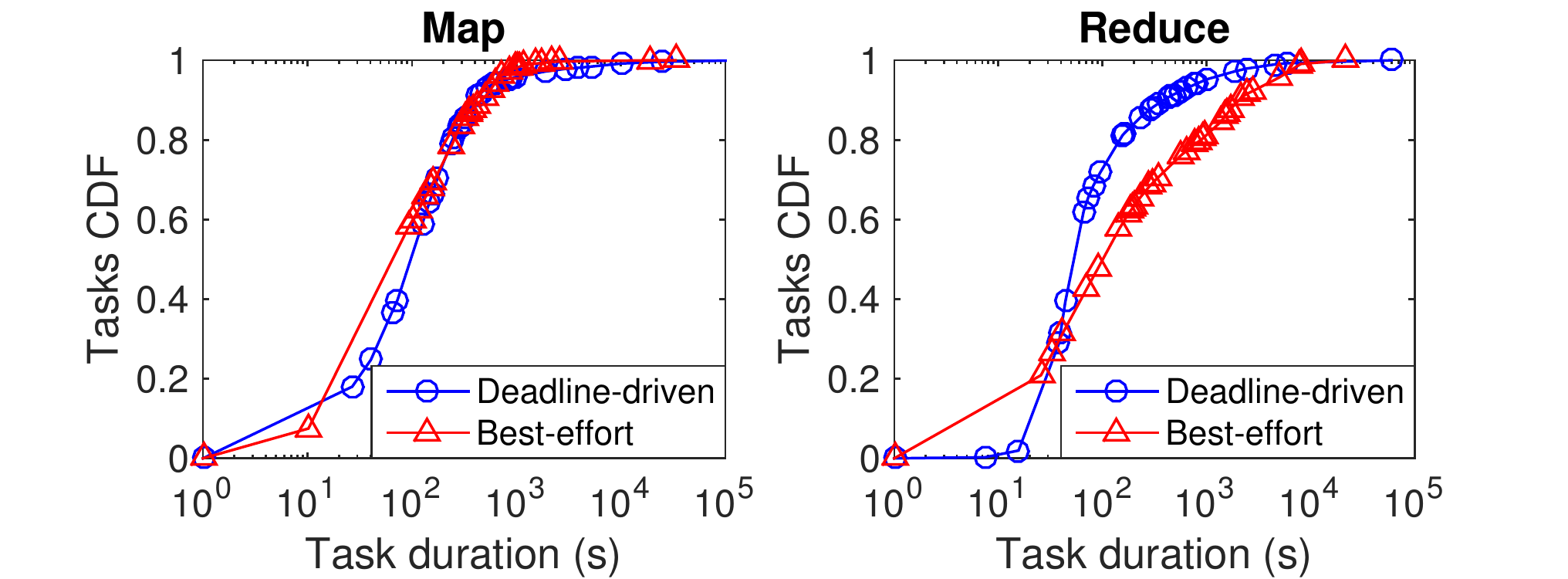}
  \vspace{-8mm}
  \caption{Task duration distributions for MapReduce workloads at
    Company ABC.}
  \vspace{-2mm}
  \label{fig:case2_taskdur}
\end{figure}

Figure~\ref{fig:case2_bar} shows the SLOs under the original expert RM
configuration and the Tempo optimized RM configuration. As can be
seen, the optimized resource allocation delivers 22\% improvement in
the average job response time of the best-effort tenant workloads and
10\% in the deadline QSs. Another improvement is in the utilization of
reduce containers, while the utilization of map containers remains at
the same level. The results are consistent with our observations of
preemption statistics, and the improvements in reduce container
utilization is due to the alleviated preemptions. In particular, the
preemption timeout settings in the Tempo-optimized RM configuration
had been self-tuned appropriately to the workload distribution.

\begin{figure}[h]
  \centering \includegraphics[scale=0.5]{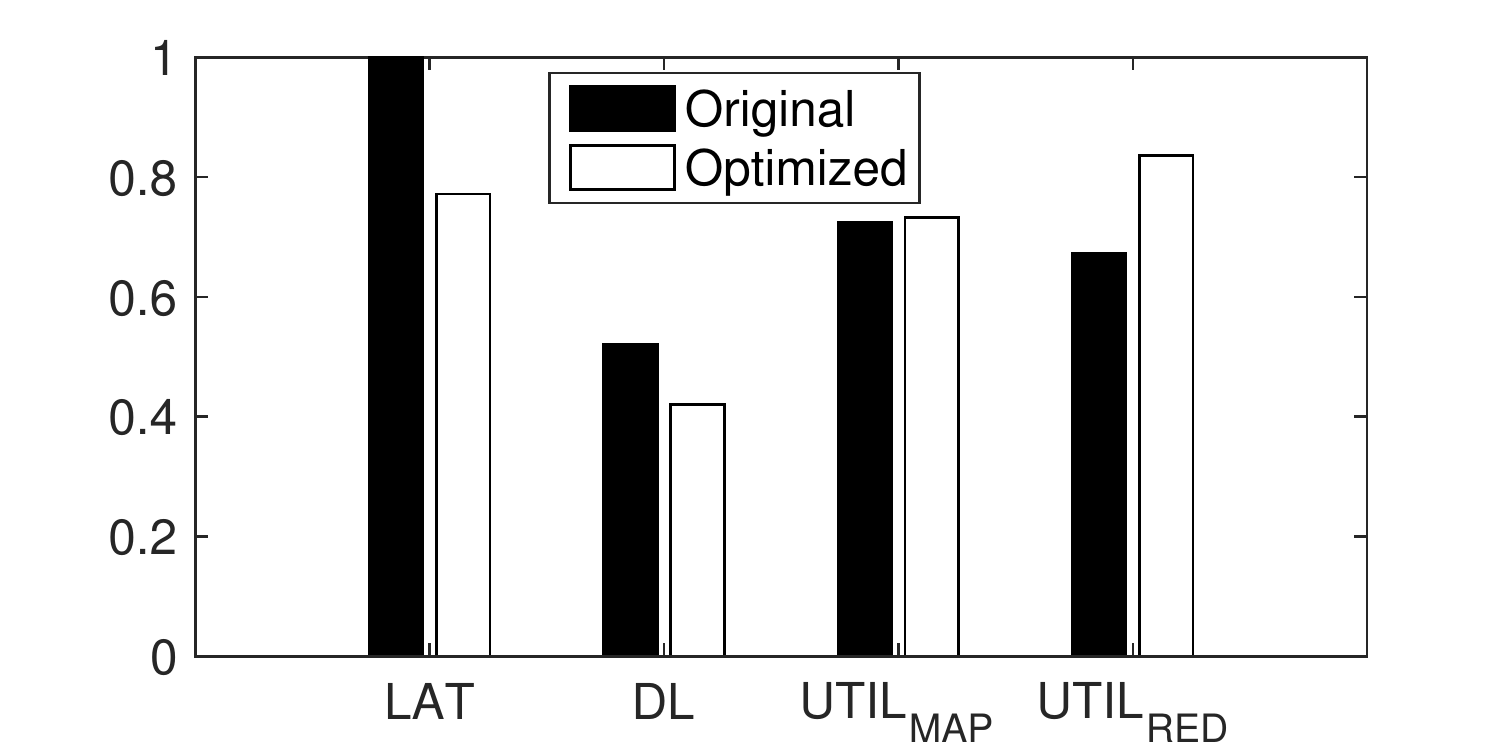}
  \vspace{-4mm}
  \caption{SLOs under the original and optimized (slack = 0) RM
    configuration: $\text{AJR}$, $\text{DL}$,
    $\text{UTIL}_{\text{MAP}}$, and $\text{UTIL}_{\text{RED}}$ are the
    average job response time of the best-effort tenant, fraction of
    deadline violations for the deadline-driven tenant, map container
    utilization, and reduce container utilization, respectively.}
  \vspace{-4mm}
  \label{fig:case2_bar}
\end{figure}

\subsubsection{Adaptivity to workload variations}
In this experiment, we applied Tempo to meet SLOs under slowly
changing workload distributions. Figure~\ref{fig:case3_instlat}
depicts the instant job response time distribution for deadline-driven
and best-effort tenants. The instant job response time is computed
using the moving average of a 30-min window. As can be seen, the
instant job response time of deadline-driven workloads exhibits a
periodic pattern while the job response time of the best-effort
workloads changes dramatically over time.

\begin{figure}[h]
  \centering \includegraphics[scale=0.44]{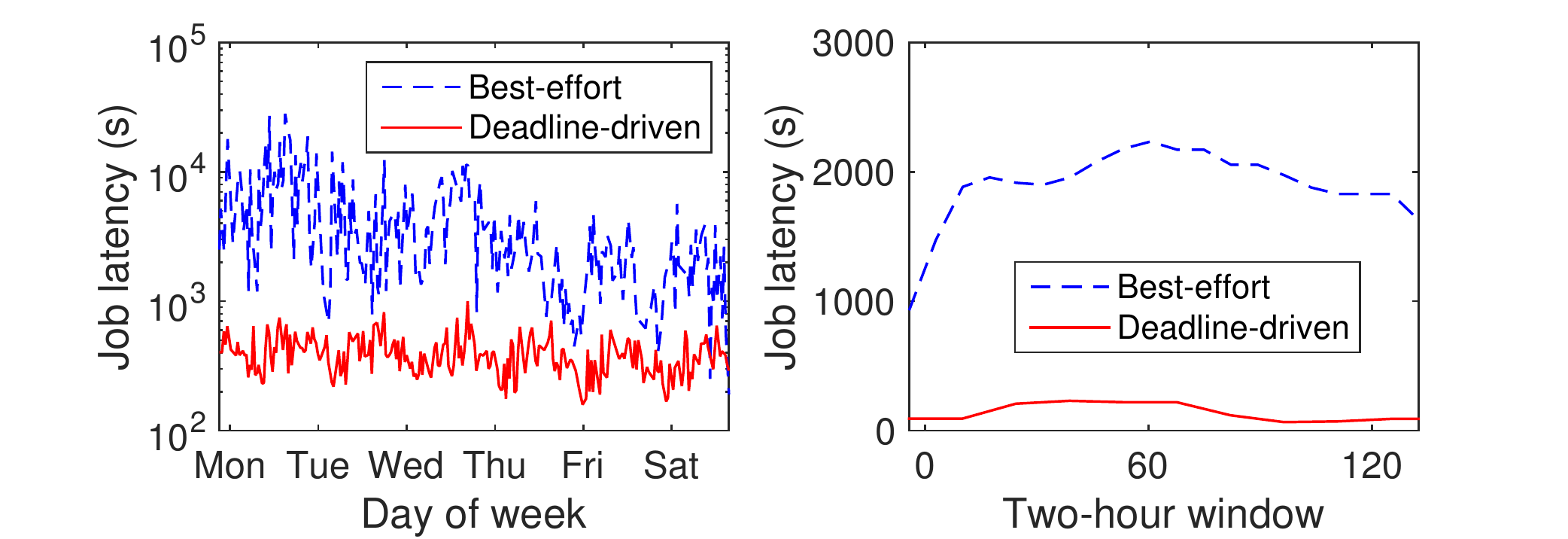}
  \vspace{-8mm}
  \caption{Instant job response time distributions. On the left shows
    the production workloads of Company ABC over the period of a
    week. On the right gives the two-hour experiment workloads on EC2
    using Facebook and Cloudera traces.}
  \vspace{-2mm}
  \label{fig:case3_instlat}
\end{figure}

Recall that each iteration of Tempo control loop uses a fixed-length
interval of most recent job traces as input. The next experiment
evaluates how different interval lengths impact Tempo's performance.

\begin{figure}[h]
  \centering \includegraphics[scale=0.5]{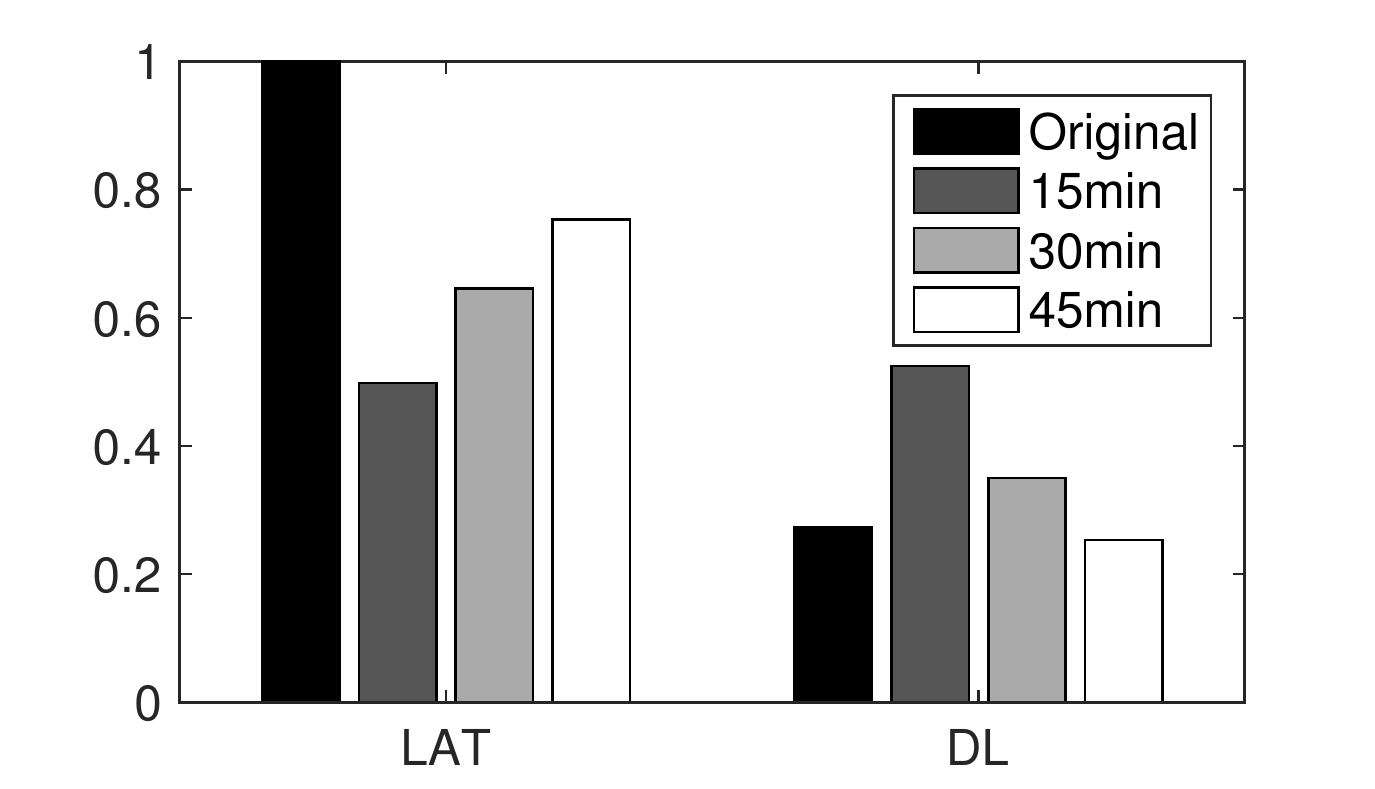}
  \vspace{-4mm}
  \caption{SLOs for different interval lengths in Tempo control
    loop. AJR denotes the normalized average job response time of the
    best-effort workloads, and DL represents the fraction of deadline
    violations (computed via $\text{QS}_{\text{DL}}$ with slack
    $\gamma=25\%$).}
  \vspace{-2mm}
  \label{fig:case3_res}
\end{figure}

Figure~\ref{fig:case3_res} shows the SLOs under the original expert RM
configuration and Tempo optimized RM configurations for interval
length 15min, 30min, and 45min. Similarly, the experiment uses SLOs
specified with $\text{QS}_{\text{AJR}}$, and $\text{QS}_{\text{DL}}$
(25\% slack). As can be seen, a small window size favors the average
job response time of the best-effort workloads while leading to a
higher percent of deadline violations. According to the results, the
45min interval length yields a similar fraction of deadline violations
as the original RM configuration, but a 22\% improvement in the
average job response time of the best-effort workloads. The results
show that Tempo can adapt to workload variations using a small
interval length.

\subsubsection{Resource provisioning and cutting costs}
The last experiment demonstrates the application of Tempo to resource
provisioning, estimating the minimum amount of resources needed to
meet the given SLOs.  This application can help users do better
resource planning and cut overprovisioning costs. In addition, this
application can bridge the gap in resource allocation between the
development cluster and the production cluster, that is, converting
the resource allocation on the development cluster for use in the
production cluster.

The experiment involves running the same given deadline-driven
workloads and best-effort workloads on three EC2 clusters with 20
nodes (100\%), 10 nodes (50\%), and 5 nodes (25\%),
respectively. Tempo was used to estimate the SLOs of the workloads
when executed on the 100\% cluster, using traces respectively from the
100\% cluster, 50\% cluster, and 25\% cluster. This experiment mimics
the scenario in which users collect traces of the workload on the
current cluster, and would like to know how a new cluster size will
impact the SLOs. (From our experience, this use case is common at
companies like LinkedIn and Yahoo.) In this case, Tempo can serve as a
key component in the decision-making for resource provisioning.

\begin{figure}[h]
  \centering \includegraphics[scale=0.43]{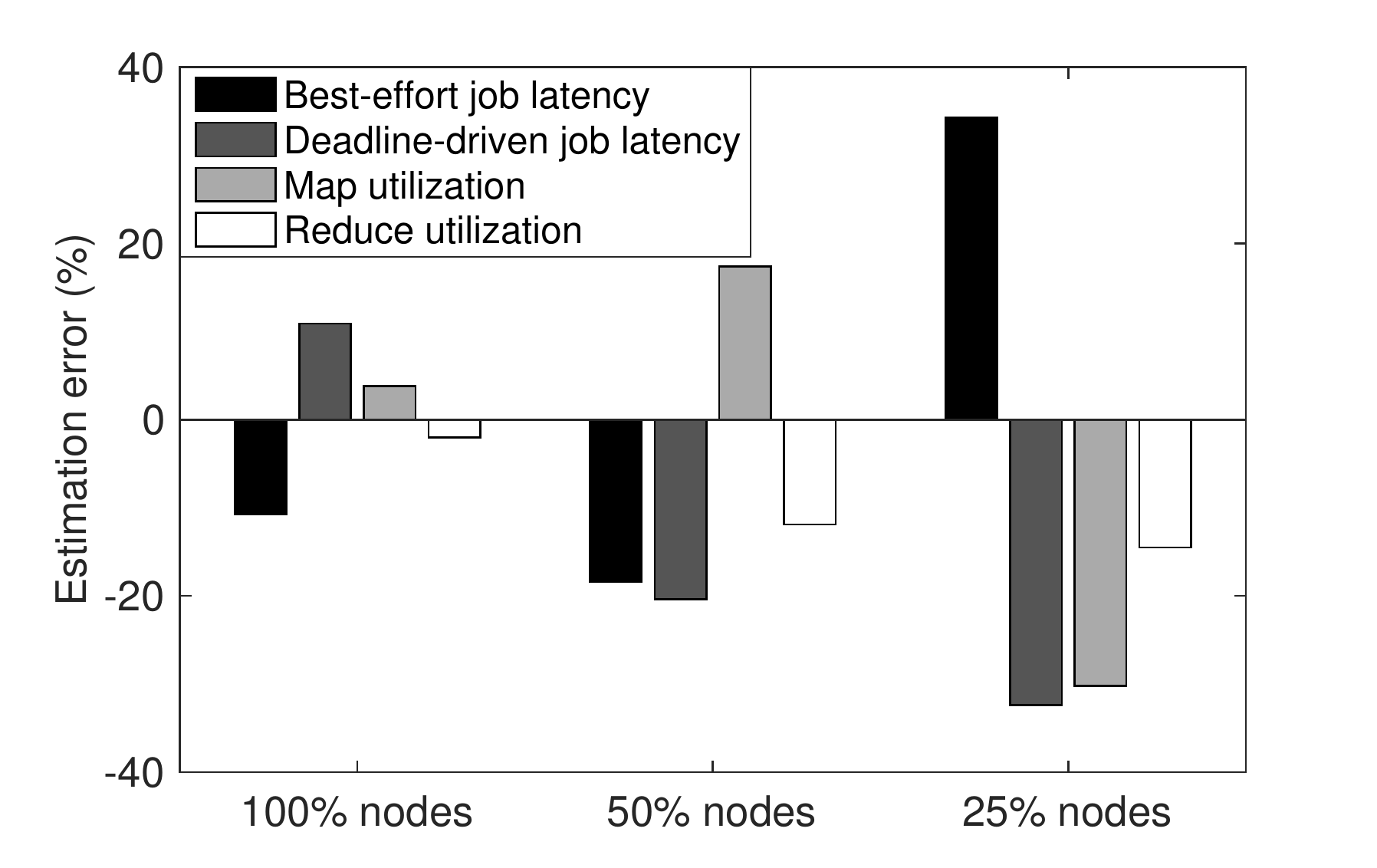}
  \vspace{-4mm}
  \caption{Errors in SLO estimation using traces based on equal and
    smaller cluster sizes.}
  \vspace{-2mm}
  \label{fig:case4}
\end{figure}

Figure~\ref{fig:case4} gives the SLO estimation errors using traces
from equal and smaller clusters. As can be seen, Tempo can
predict---with the error no more than 20\%---the SLOs of the current
workloads run on a double-size cluster; using traces collected from
the current cluster. Predicting the SLOs of the current workloads run
on a quadruple-size cluster results in a maximum error of 35\%.

\section{Related Work}
\label{sec:related}

Most Resource Managers (RMs) that are deployed on multi-tenant ``big
data'' database systems today like RedShift, Teradata, Vertica,
Hadoop, and Spark are based on simple resource allocation principles
such as static resource partitioning \cite{redshift}, {\em max-min
  fairness} \cite{FAIR,CAPACITY}, or {\em dominant resource fairness}
\cite{Ghodsi13,Ghodsi11,Shue12}.  Quincy \cite{Isard09} is a Fair
scheduler which takes into account data locality as a
preference. Choosy \cite{Ghodsi13} further extends the max-min
fairness to support hard job placement constraints.

Parallel database systems like IBM DB2 PE, RedShift, Teradata, and
Vertica have RMs (usually called Workload Managers) that allow DBAs to
specify {\em filter}, {\em throttle}, and {\em workload} rules to
dynamically adjust the resource allocation of tenants. For example, a
filter rule can reject unwanted logon and query requests before they
are executed; a throttle rule can restrict the number of requests
simultaneously executed against a database object. In addition, some
databases allow user-defined events relevant to workload management to
be defined and actions taken based on them.  RedShift and Vertica use
resource pools where each pool has parameters such as resource limits,
priorities, and maximum concurrency like the RM configuration
described in Section \ref{sec:resource_model}.

Mesos \cite{Hindman11} introduces two-level resource allocation to
support other custom RMs and improves data locality while allocating
resources to tasks.  YARN \cite{Vavil13} separates computation and
resource management by introducing a resource layer. Corona
\cite{Corona} uses a dedicated job tracker for each job to improve
cluster utilization, and adopts a push-based scheduling model to
improve the scalability.  Omega \cite{Schw13} improves the scalability
of RMs using parallelism, shared state, and lock-free optimistic
concurrency control. Sparrow \cite{Ous13} is a stateless scheduler
that aims for low job response time in task scheduling, leveraging
load-balancing techniques. It supports multiple schedulers as well as
job and task placement constraints.  Fuxi \cite{Zhang14} enhances the
fault tolerance and scalability of RMs by introducing user-transparent
failure recovery features and a failure detection mechanism. Apollo
\cite{Boutin14} is another shared-state scheduling framework, and
takes into account the data locality and server load to achieve
high-quality scheduling decisions. Apollo also introduces
opportunistic scheduling to improve cluster utilization, and
unexpected cluster dynamics by detecting abnormal runtime behaviors.

Unlike Tempo, all the above RMs leave it to DBAs and cluster operators
to create specific resource allocation policies as plugins.  Other RMs
have been proposed recently that take certain tenant-level SLOs into
consideration.  Rayon \cite{Curino14} introduces reservation-based
scheduling to make resource allocation more predictable by planning in
advance. The planning also increases the resource utilization. Rayon
considers two types of jobs, production and best-effort jobs, and
seeks to meet the completion deadlines of the production jobs and
reduce completion latency of the best-effort jobs. However, unlike our
work, Rayon is 1) intrusive in that it makes changes that cannot be
applied easily to a wide class of RMs in multi-tenant database
systems; 2) supports only two types of SLOs; 3) can be potentially
wasteful by reserving resources in the presence of job failures, and
4) may still need the user/DBA to determine how much resources need to
be reserved to meet SLOs, which is difficult.

In \cite{Kc10}, the authors develop a deadline estimation model and
apply real-time scheduling to meet job deadlines. ARIA \cite{ARIA}
provides support for job deadlines by profiling jobs and modeling
resource requirements in order to complete before the deadline. WOHA
\cite{Li14} improves workflow deadline satisfactions in Hadoop.
Tetris \cite{Grandl14} avoids resource fragmentation by introducing
multi-resource packing of tasks to machines.  Tetris improves the
average job completion time, and achieves high cluster makespan. The
authors show that task packing can also work without significantly
violating the fairness.  Tetris lacks the rich support for SLOs that
Tempo provides.

Pisces \cite{Shue12} focuses on datacenter-wide per-tenant performance
isolation and fairness for multi-tenant cloud storage.  Amoeba
\cite{Ana12} delivers lightweight elasticity to compute clusters by
splitting original tasks into smaller ones, and allowing safe exit of
a running task and later resuming the task by spawning a new task for
its remaining work. These features can help reduce the cost of
preemption.  Pulsar \cite{Angel14} is a resource management framework
similar to Tempo. Pulsar provides end-to-end performance isolation
through a {\em virtual datacenter abstraction (VDC)} which essentially
encapsulates resource demand forecasting and QS metrics for each
tenant. However, Pulsar focuses on sufficient resource scenarios and
does not support multiple performance goals associated with each
tenant as well as their trade-offs in a limited resource setting. The
effectiveness of Pulsar also relies on the accuracy of user-specified
cost functions in VDCs and resource demand estimation. Unlike Tempo,
robust resistance to noise (inaccuracies in both cost and demand
estimation) is not guaranteed in Pulsar.

Personalized Service Level Agreements (PLSAs) are introduced in
\cite{Ortiz15} which serve as cost models connecting resources and
SLOs. PLSAs can be used as QS metrics in Tempo based on SQL queries
executed by each tenant.

We also briefly summarize main related work on the various classes of
multi-objective optimization problems, and discuss their limitations
with respect to what Tempo provides.
\squishlist
\item \textbf{Convex and noiseless objective function:} This scenario
  can be solved using the well-known weighted sum scalarization. An
  alternative method, multiple gradient descent (MGDA), is proposed in
  \cite{mgda12} for solving the weighted sum scalarization
  problem. Another notable approach is Normal Boundary Intersection
  (NBI) \cite{Das98}, which allows evenly finding the Pareto frontier.
\item \textbf{Non-convex and noiseless objective function:} One recent
  approach is conic scalarization (CS) \cite{Kasi13}, in which a new
  scalarization method is proposed to achieve Benson and Henig proper
  efficiency. However, the choice of the weight vector is not
  addressed in CS. There are also several evolutionary algorithms
  \cite{Deb02,Deb06,Know06}, which assume noiseless or low-noise
  situations, but are expensive to run.
\item \textbf{Convex and noisy objective function:} In \cite{Mahd13},
  the authors present a stochastic primal-dual algorithm for a single
  objective but multiple constraint problem; and it does not seek a
  Pareto-optimal solution. Noticeably, MGDA \cite{mgda12} could
  potentially be extended to the noisy scenario, but it does not
  handle the first set of constraints in the problem \eqref{eq:orig}.
\item \textbf{Non-convex and noisy objective function:} A
  prediction-based approach is proposed in \cite{Zulu13}. This method
  estimates the Pareto frontier under the Gaussian assumption of the
  loss objectives, and does not specifically take into account the
  preference constraints.
\squishend

\section{Conclusion and Future Work}

Meeting the SLOs of business-critical workloads while achieving high
resource utilization in multi-tenant ``big data'' database systems is
an important problem.  The vast majority of resource
allocators/schedulers deployed on multi-tenant database systems today
rely on the DBAs to configure low-level resource settings. This
process is brittle and increasingly hard as workloads evolve, data and
cluster sizes change, and new workloads are added. In this paper, we
presented a framework, Tempo, which enables DBAs to work with
high-level SLOs conveniently. We demonstrated in both theory and
practice that Tempo is self-tuning and robust for achieving guaranteed
SLOs in production database systems.

The current implementation of Tempo can simulate RMs like Mesos and
YARN efficiently, using the time warp mechanism. Tempo can leverage RM
simulators that have already been developed for several popular
systems such as Borg, Apollo, and Omega. However, most of these
simulators are designed to reproduce the real-time behavior of the RM,
which may not deliver comparable efficiency as time warp
simulation. Thus, one interesting direction is to add effective
support for other RMs in Tempo.

The SLO abstraction in Tempo, i.e., QS metrics, allows each tenant to
specify one or more SLOs, which apply to all workloads of the
tenant. The effective spectrum of the SLOs corresponds to the
structure of resource configurations in popular RMs like Mesos and
YARN, where parameters are grouped by tenants (also known as pools or
queues). To support more fine-grained SLOs among workloads of the same
tenant, one workaround is to create hierarchical tenants as used in
the Hadoop Capacity Scheduler. One future direction in Tempo is to
provide native fined-grained SLO support for workloads from the same
tenant.

A third direction is to explore scenarios where each tenant executes
workloads exhibiting a mix of statistical characteristics.  The
current optimization in Tempo exploits the observation that workloads
from the same tenant follow relatively fixed statistical
characteristics. This assumption alleviates us from the restriction of
having to observe historical executions of a newly-submitted job (both
recurring and ad-hoc). To support workloads with a mix of statistical
characteristics, one potential approach is to decompose the workloads
and then distribute the workloads to separate tenants.

{\footnotesize \bibliographystyle{acm}
\bibliography{ref}}

\end{document}